\newcommand{\tr}{\text{tr}\,}
\newcommand{\cc}{\mathbb{C}}
\newcommand{\cir}{\,\text{circ}}
\title{Systems of MDS codes from units and idempotents.}
\author{Barry Hurley\footnote{National University
 of Ireland Galway, email: Barryj\_2001@yahoo.co.uk} \, \& Ted
 Hurley\footnote{National Universiy of Ireland Galway, email:
 Ted.Hurley@Nuigalway.ie }}
\date{} 
\begin{document}

\maketitle

\begin{centering}

\end{centering}

\begin{abstract} Algebraic systems are constructed from which series of 
maximum distance separable (mds) codes are derived. 
The methods use unit  and idempotent schemes. 
\end{abstract}

\section{Introduction}

Algebraic coding theory deals with the design of error-correcting and
error-detecting codes for the reliable transmission of information
across noisy channels. It has many applications, modern communications
could not be undertaken without it and much research is still
on-going.  Coding theory in general makes use of many abstract notions
such as fields, group theory, polynomial algebra and areas of discrete
mathematics. 

A basic reference for coding theory is Blahut
\cite{blahut}. Codes
from zero divisors and unit-derived codes in group rings and matrix
rings are obtained in \cite{hur1} and in more detail in \cite{hur2}.
  An $(n,r,d)$ (linear) code is a code of length $n$, dimension $r$
  and distance $d$. By the Singleton Bound, see for example Theorem
  3.2.6 of \cite{blahut}, the maximum $d$ can have is $(n-r+1)$ and so
  an {\em mds (maximum distance separable) code} is defined as a code
  of the form $(n,r,n-r+1)$ or equivalently a code of the form
  $(n,n-r,r+1)$.

Here systems and series of such mds codes are derived. 
 This paper originated from ideas of constructing codes from
{\em complete orthogonal sets of idempotents} in general rings and in
particular in group rings.
Constructions of such idempotent systems in general are dealt with in
\cite{hur5} where these are used to construct paraunitary (single and
multivariable) matrices which are used in the communications' areas.

A query to the pub-group forum was answered by Marty Isaacs who brought the
results of \cite{isaacs1} and a result of Chebotar\"ev to our
attention. Using Chebotar\"ev's result directly and the unit-derived
coding method of \cite{hur1} enables the construction of series of mds
codes over $\cc$ initially using the Fourier matrices. Results in 
\cite{isaacs1}  
are then exploited to construct finite fields over which the
Chebotar\"ev's result is true and hence to derive series of mds codes
over these finite fields.
The paper \cite{isaacs1} in addition contains a proof of
Chebotar\"ev's original result and a number of other nice results
besides.

In section \ref{above} methods are
derived for constructing general codes from complete orthogonal sets
of idempotents. 
Specialising then enables systems of mds codes to be derived over
various fields; Chebotar\"ev's result and the results of
\cite{isaacs1} are used to show algebraically that the maximum
distances are actually attained.

Sets of vectors $S=\{e_0,e_1,\ldots, e_{n-1}\}$ in 
 $K^n$ for various fields $K$ and prime $n$ are derived  
such that any $r$ elements of $S$ generates an  $(n,r,n-r+1)$
code. For given $r$ there are ${n\choose r}$ choices for defining such
a code from $S$ and each code is different.
 
Sets of idempotents matrices $T=\{E_0,E_1,\ldots,E_{s-1}\}$ in $K_{n\ti n}$ are
defined over fields $K$ such that $\{E_j \, | j\in J\}$ where
$J\subset I= \{0,1,2,\ldots, s-1\}$ generates an $(n,r)$ code where $r
= \sum_{j\in J}\rank E_j$.  In certain cases when $s=n$ and $n$ is prime 
these are shown to be mds
codes. 

The mds codes derived using idempotents from the cyclic group ring 
 may be considered as
those where the Fourier transform has zeros at $k$ specified locations
which need not be consecutive.  


One of the features of some of the
series of mds codes derived is that these are codes over  
a finite field $F_p$, for $p$  a prime,  and modular arithmetic may
be used.       

Section \ref{decode} considers decoding methods for such codes. 
As the dimension and distance of a space generated by a
 subset of $S$ is easily determined, it is then 
 possible to find  $t$-error correcting pairs in many of 
 these $(n,r,n-r+1)$ codes 
 for maximum $t$ (that is
 for $t=\lfloor\frac{(n-r)}{2}\rfloor$).
Now $t$-error correcting pairs were introduced by Duursma and
K\"otter,\cite{koetter} and by Pellikaan \cite{pell}.

\section{Codes from units}\label{units} 
Unit-derived codes, as in \cite{hur1,hur2}, are defined as follows.
Suppose $UV=I$ in $F_{n\ti n}$.  Divide $U = \begin{pmatrix} A
  \\ B \end{pmatrix}$ into block matrices where $A$ is an $r\times n$
matrix and
$B$ is $(n-r)\times n$. Similarly divide $V$ into blocks $V=
\begin{pmatrix} C & D\end{pmatrix}$ where $C$ is an $n \times r$
  matrix and $D$ is an $n\times (n-r)$ matrix.

Now $AD = 0$ as $UV=I$. It is easy to show that $A$ generates an
$(n,r)$ code and that $D\T$ is a check matrix for this code.

The above method is generalised as follows, see \cite{hur1,hur2} for
details.  Let the rows of $U$ be denoted by $\{u_1,u_2,\ldots, u_n\}$
and the columns of $V$ denoted by $\{v_1,v_2,\ldots, v_n\}$. Choose
$r$ rows $\{u_{i_1},u_{i_2},\ldots,u_{i_r}\}$ of $U$ as a generating
matrix $A$ which is then of size $r\ti n$ and has $\rank r$.  Let
$K=\{1,2,\ldots,n\}$ and $L=\{i_1,i_2,\ldots, i_r\}$ and $J=
(K-M)$. Choose $D$ to be the matrix formed (in any order) from the
(column) vectors $S=\{v_j\, | \, j\in J\}$. Then $D$ has $\rank (n-r)$
and is of size $n\ti (n-r)$ and $D\T$ is a check matrix for the $(n,r)$
code generated by $A$.

(The $r$ rows of $U$ used to form $A$ are usually taken in their
naturally occurring order but this is not necessary.  The matrix $D$
can be formed from the column vectors $S$ in any order but the natural
order of the elements of $S$ would normally be used.)

These codes are linear but in general are not ideals.

Thus any rows of $U$ may be used as a generator matrix for a code and
then corresponding columns of $V$ as indicated give a check
matrix. From a single unit of size $n\ti n$ there are ${n\choose r}$
choices for an $(n,r)$ code and each code is different.
The fact that the codes are different follows from the following Lemma
\ref{differ}. Define, in a vector space, $\langle X\rangle $ to be the
subspace generated by $X$. 
\begin{lemma}\label{differ} Let $T$ be a set of linearly independent vectors and
  $S\subseteq T, W\subseteq T$. Then $\langle S \rangle \cap \langle W
  \rangle = \langle S\cap W \rangle$.
\end{lemma}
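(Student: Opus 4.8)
The plan is to prove both inclusions of the set equality $\langle S\rangle \cap \langle W\rangle = \langle S\cap W\rangle$, with the nontrivial direction being $\langle S\rangle \cap \langle W\rangle \subseteq \langle S\cap W\rangle$. The inclusion $\langle S\cap W\rangle \subseteq \langle S\rangle \cap \langle W\rangle$ is immediate, since $S\cap W \subseteq S$ gives $\langle S\cap W\rangle \subseteq \langle S\rangle$, and likewise $\langle S\cap W\rangle \subseteq \langle W\rangle$, so the intersection contains it. No use of linear independence is needed here.

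For the forward inclusion, I would take an arbitrary vector $v \in \langle S\rangle \cap \langle W\rangle$ and write it two ways: as a linear combination $v = \sum_{t\in S}\alpha_t\, t$ and as $v = \sum_{t\in W}\beta_t\, t$. Extend both sums to range over all of $T$ by declaring $\alpha_t = 0$ for $t\notin S$ and $\beta_t = 0$ for $t\notin W$. Subtracting gives $\sum_{t\in T}(\alpha_t - \beta_t)\, t = 0$, and here is where linear independence of $T$ enters decisively: it forces $\alpha_t = \beta_t$ for every $t\in T$. Now for $t\in S\setminus W$ we have $\beta_t = 0$, hence $\alpha_t = 0$; and for $t\in W\setminus S$ we have $\alpha_t = 0$ directly. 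Therefore the only $t$ with a possibly nonzero coefficient in $v = \sum_{t\in T}\alpha_t\, t$ lie in $S\cap W$, so $v \in \langle S\cap W\rangle$.

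The main (indeed only) obstacle is the bookkeeping of indexing the linear combinations over a common superset so that the subtraction step is legitimate and so that linear independence of $T$ can be applied as a single statement; once that is set up, the conclusion is a one-line consequence. I should also note that the hypothesis $S, W \subseteq T$ with $T$ linearly independent automatically makes $S$, $W$, and $S\cap W$ linearly independent, so the subspaces $\langle S\rangle$, $\langle W\rangle$, $\langle S\cap W\rangle$ are well understood, but this observation is not strictly needed for the argument above. If one wanted, the result could alternatively be phrased as: for linearly independent $T$, the map $X \mapsto \langle X\rangle$ from subsets of $T$ to subspaces preserves intersections — but the direct element-chase is the cleanest route and the one I would write out.
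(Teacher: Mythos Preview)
Your proof is correct and is essentially the same approach the paper takes: the paper simply says the result follows directly from the linear independence of $S$ and $W$, and your argument is exactly the natural element-chase that makes this precise. Nothing further is needed.
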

\begin{proof} The proof follows directly from the linearly
  independence of the sets $S$ and $W$.
\end{proof} 

Suppose then $UV=1$ in $F_{n\ti n}$. Then taking any $r$ rows of $U$
as a generator matrix $U_r$ and then certain defined $(n-r)$ columns
of $V$ to give the check matrix $V_{n-r}$ defines an $(n,r)$ code. Let
such a code be denoted by $\mathcal{C}_r$.
If matrix $V$ has the property that the determinant of any square
submatrix of $V$ is non-zero then any such code is an mds
$(n,r,n-r+1)$ code.

\begin{theorem}\label{bigdist}
 Suppose the determinant of any square submatrix of $V$ is
 non-zero. Then any such code $\mathcal{C}_r$ has distance $(n-r+1)$
 and is thus an $(n,r,n-r+1)$ mds code.
\end{theorem}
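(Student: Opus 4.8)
The plan is to translate the distance statement into a non-singularity statement about submatrices of the check matrix and then to quote the hypothesis. By the construction recalled above, the code $\mathcal{C}_r$ generated by the chosen $r$ rows $U_r$ of $U$ has check matrix $H = V_{n-r}\T$, where $V_{n-r}$ is the $n\times(n-r)$ matrix whose columns are $\{v_j \mid j \in J\}$ with $J = K\setminus L$ and $|J| = n-r$; moreover $H$ has rank $n-r$. Recall the standard fact (see e.g. \cite{blahut}) that a linear code with check matrix $H$ has minimum distance $\ge d$ if and only if every $d-1$ columns of $H$ are linearly independent. Since the Singleton Bound already gives $d(\mathcal{C}_r) \le n-r+1$, it is enough to show that every set of $n-r$ columns of $H$ is linearly independent; this yields $d(\mathcal{C}_r) \ge n-r+1$, hence equality, and $\mathcal{C}_r$ is an $(n,r,n-r+1)$ mds code.

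First I would observe that the columns of $H = V_{n-r}\T$ are precisely the rows of $V_{n-r}$. Fix any $n-r$ of them and let $M$ be the $(n-r)\times(n-r)$ matrix having these as its rows. Then $M$ is exactly the matrix obtained from the \emph{full} matrix $V$ by keeping the $n-r$ rows just selected together with the $n-r$ columns indexed by $J$; in other words $M$ is a square submatrix of $V$. By hypothesis $\det M \ne 0$, so these $n-r$ columns of $H$ are linearly independent. As the selection was arbitrary, every $n-r$ columns of $H$ are independent and the required bound follows.

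The argument is short once the right characterisation is in place, and I expect no serious obstacle; the one point that needs care is the identification in the second paragraph, namely that an arbitrary choice of $n-r$ columns of $H$ really does correspond to a square submatrix of $V$ rather than merely of the column-selected matrix $V_{n-r}$ (the column set of this submatrix being pinned to $J$ while its row set ranges over all $\binom{n}{n-r}$ possibilities). It is also worth noting why one cannot instead argue via the generator matrix $U_r$: that route would require every $r\times r$ submatrix of $U$ to be non-singular, which is a condition on $U$ and not on $V$, so the check-matrix formulation is the natural one here.
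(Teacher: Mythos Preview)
Your proof is correct and follows essentially the same route as the paper: the paper simply cites Theorem~3.2.2/Corollary~3.2.3 of \cite{blahut} (the check-matrix characterisation of minimum distance) together with the observation that the relevant $(n-r)\times(n-r)$ blocks are square submatrices of $V$, while you spell out this argument in full. Your added care in identifying the columns of $H=V_{n-r}\T$ with square submatrices of $V$ (column set fixed at $J$, row set arbitrary) is exactly the point the paper leaves implicit.
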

\begin{proof} The proof follows from Theorem 3.2.2/Corollary 3.2.3 of
  \cite{blahut} as any $(n-r)\ti (n-r)$ submatrix of $V$ has non-zero
  determinant. 
\end{proof}


Suppose for example $UV=I$, $U$ has size $101\ti 101$ and we are
interested in $(101,50)$ codes. Choosing {\em any} $50$ of the rows of
$U$ gives such a code and each one is different thus giving $101
\choose 50$ such codes. Now $101 \choose 50$ is of order $10^{29} $ or
$2^{97}$. There exist $101\choose 80$, which is of order of $2^{37}$,
high rate code $(101,80)$ in such a system.
If the determinant of any square submatrix of $V$ is non-zero we get
of the order of $2^{97}$ mds codes $(101,50,52)$ and of the order of
$2^{37}$  mds codes $(101,80,22)$.

\section{Chebotar\"ev's Theorem}\label{8}
Let $\al$ be a primitive $n^{th} $ root of unity in a field $K$ in
which the inverse of $n$ exists. The Fourier $n\ti n$ matrix $F_n$
over $K$ is

$F_n=\begin{pmatrix} 1& 1& 1 & \ldots & 1 \\ 1 & \al & \al^2 & \ldots
& \al^{(n-1)} \\ 1 & \al^2 & \al^4 & \ldots & \al^{2(n-1)} \\ \vdots &
\vdots & \vdots & \vdots & \vdots \\ 1 & \al^{n-1}& \al^{2(n-1)} &
\ldots & \al^{(n-1)(n-1)} \end{pmatrix}$.

The inverse of $F_n$ is

$F_n^*=\frac{1}{n}\begin{pmatrix} 1& 1& 1 & \ldots & 1 \\ 1 & \al^{-1} &
  \al^{-2} & \ldots & \al^{-(n-1)} \\ 1 & \al^{-2} & \al^{-4} & \ldots
  & \al^{-2(n-1)} \\ \vdots & \vdots & \vdots & \vdots & \vdots \\ 1 &
  \al^{-(n-1)}& \al^{-2(n-1)} & \ldots &
  \al^{-(n-1)(n-1)} \end{pmatrix} $.

($\al^{-1}$ is a primitive $n^{th}$ root of $1$ also and 
the matrix $nF_n^*$ is considered  a Fourier matrix over $K$.) 

\subsection{Fourier}\label{fourier}
We are grateful to Marty Isaacs for bringing the following result of
Chebotar\"ev and the paper \cite{isaacs1} to our attention. A proof of
this Chebotar\"ev theorem may be found in \cite{isaacs} and proofs
also appear in the expository paper of P.Stevenhagen and H.W Lenstra
\cite{steven}; paper \cite{simple} contains a relatively short
proof. There are several other proofs in the literature some of which
are referred to in \cite{steven}.  A proof of the Theorem is also
contained in
\cite{isaacs1} and this paper contains many nice related results and
results related to fields in general (and not just $\cc, \R$) 
as we shall see later. Paper
\cite{tao} contains a proof of Chebotar\"ev's theorem and refers to it
as `an uncertainty principle'. 
\begin{theorem}\label{cheb}(Chebotar\"ev) Suppose
  that $\om \in \cc$ is a primitive $p^{th}$ root of unity where $p$
  is a prime. Let $V$ be the Fourier matrix with $(i, j)$-entry equal
  to $\om^{ij}$ , for $0\leq i,j \leq p-1$.  Then all square submatrices
  of $V$ have nonzero determinant.
\end{theorem}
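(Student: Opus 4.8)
The plan is to argue by contradiction: suppose some $k\times k$ submatrix of $V$ is singular. Fix row indices $i_1<\dots<i_k$ and column indices $j_1<\dots<j_k$ in $\{0,1,\dots,p-1\}$, and let $M$ be the corresponding submatrix, so $M_{ab}=\om^{i_a j_b}$. A null vector $(c_1,\dots,c_k)\neq 0$ of $M\T$ gives a nonzero polynomial $f(x)=\sum_{b=1}^{k} c_b x^{j_b}$ of degree at most $p-1$ that vanishes at the $k$ distinct points $\om^{i_1},\dots,\om^{i_k}$. Thus it suffices to show that a nonzero polynomial over $\cc$ supported on at most $k$ monomials cannot vanish at $k$ distinct $p$-th roots of unity. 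Equivalently, and more usefully, I will show the following algebraic fact: if $f\in\cc[x]$ is nonzero with at most $k$ nonzero terms and $f$ vanishes at $k$ distinct $p$-th roots of unity, then we reach a contradiction with $p$ being prime.

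The key step is to exploit the Galois action. Let $\zeta=\om$ be our fixed primitive $p$-th root of unity, so the $\om^{i_a}$ are among $\zeta,\zeta^2,\dots,\zeta^{p-1}$ (WLOG none of them is $1$, since vanishing at $1$ just says the coefficient sum is $0$, which we can also handle). The minimal polynomial of $\zeta$ over $\Q$ is the cyclotomic polynomial $\Phi_p(x)=1+x+\dots+x^{p-1}$, which is irreducible precisely because $p$ is prime; hence $\Q(\zeta)/\Q$ is Galois of degree $p-1$ with group $(\Z/p\Z)^\times$ acting by $\zeta\mapsto\zeta^t$. The strategy is: reduce to the case where $f$ has \emph{rational} (indeed integer) coefficients by replacing $f$ with the product $\prod_\sigma \sigma(f)$ over $\sigma\in\mathrm{Gal}(\Q(\zeta)/\Q)$, or more cleverly, apply one well-chosen automorphism $\sigma:\zeta\mapsto\zeta^t$ to one of the vanishing equations and combine it with the original to force a degree contradiction. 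The cleanest route I know proceeds by induction on $k$: pick $\sigma$ with $\sigma(\om^{i_1})=\om^{i_1}$... this does not simplify. Instead, I would use the following counting argument: apply all automorphisms $\zeta\mapsto\zeta^t$ to the statement "$f$ vanishes at $\zeta^{i_1},\dots,\zeta^{i_k}$." Since the $i_a$ together with $p$ generate at most $k$ residues but an automorphism permutes the roots transitively among nonzero residues, one shows that $g(x)=\prod_{t}f(\zeta^t x^{?})$... The actually workable idea is Tao's: consider the "uncertainty principle" formulation — if $f$ is supported on a set $A\subseteq\Z/p\Z$ of size $|A|$ and its Fourier transform (values at $p$-th roots of unity) is supported on a set of size $|B|$, then $|A|+|B|\ge p+1$ when $p$ is prime. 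Here $f$ is supported on $\{j_1,\dots,j_k\}$, so $|A|\le k$, and $f$ vanishes at the $k$ roots $\om^{i_1},\dots,\om^{i_k}$ so $|B|\le p-k$; then $k+(p-k)=p<p+1$, the required contradiction.

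So the concrete plan is: (1) translate the singular-submatrix hypothesis into the existence of a nonzero $f\in\cc[x]$ with at most $k$ monomials vanishing at $k$ distinct $p$-th roots of unity; (2) prove the uncertainty inequality $|\mathrm{supp}(f)|+|\{\text{$p$-th roots where }f=0\}|\ge p+1$ for $p$ prime, via the cyclotomic/Galois argument — the standard proof multiplies together the Galois conjugates of a suitable polynomial and counts zeros with multiplicity, using that $\Phi_p$ is irreducible of degree $p-1$ to control the minimal polynomial of the product of the relevant algebraic numbers, or alternatively uses a resultant/Vandermonde argument where one shows a certain $k\times k$ minor of a Vandermonde-type matrix in the $\om^{i_a}$ is an algebraic integer that is a unit times a nonzero rational and hence nonzero; (3) conclude $k+(p-k)\le p<p+1$, a contradiction.

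The main obstacle is step (2): this is the real content of Chebotar\"ev's theorem, and making the Galois-conjugate multiplication argument rigorous requires care in tracking how automorphisms $\zeta\mapsto\zeta^t$ move both the support set $A$ (via multiplication by $t$ in $\Z/p\Z$, which is a bijection since $p$ is prime — this is exactly where primality is essential) and the zero set, and then extracting from the product polynomial a genuine nonzero rational quantity using irreducibility of $\Phi_p$. One must verify that no accidental cancellation occurs, which is precisely where the primality of $p$ forestalls counterexamples (they do exist for composite $n$). I expect this step to occupy the bulk of the proof; the reductions in steps (1) and (3) are formal. I would therefore cite one of \cite{steven,simple,tao,isaacs1} for the core inequality if a self-contained treatment would be too long, and otherwise reproduce the short Galois-multiplication proof.
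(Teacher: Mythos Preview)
The paper does not give its own proof of Chebotar\"ev's theorem; it simply states the result and refers the reader to \cite{isaacs}, \cite{steven}, \cite{simple}, \cite{tao}, and \cite{isaacs1}. Your proposal ultimately lands in the same place --- you say you would cite one of these references for the core inequality --- so in that respect it matches the paper's treatment exactly.

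Your sketch does go further than the paper attempts. The reduction in step~(1) (a singular $k\times k$ submatrix yields a nonzero polynomial with at most $k$ monomials vanishing at $k$ distinct $p$-th roots of unity) is correct and standard, and the uncertainty-principle reformulation you give in step~(3), namely $|\mathrm{supp}(f)|+|\{\text{roots where }f\neq 0\}|\ge p+1$, is precisely Tao's version \cite{tao} and the arithmetic $k+(p-k)=p<p+1$ is the right contradiction. Step~(2), however, is not a proof but a sequence of abandoned starts: you begin a Galois-conjugate-product argument, drop it, try an automorphism trick, drop that, and finally defer to the literature. If you intend only to cite, then your plan is fine and aligned with the paper. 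If you intend a self-contained argument, commit to one route --- Frenkel's \cite{simple} or Tao's \cite{tao} are each under a page --- and carry it through; as written, step~(2) reads as exploratory notes rather than an argument. Also, the ``WLOG none of them is $1$'' aside is unnecessary: the root $1$ is a $p$-th root of unity like any other and the uncertainty inequality handles it uniformly, so there is no case split to make.
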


Let $F_n$ denote the Fourier $n\ti n $ matrix over $\cc$ with $F_nF_n^*=I_n$. 
Here $nF^*$ is now the complex conjugate transposed of $F_n$.

 We can define unit-derived codes using the unit $F_n$ (or the unit
 $nF_n^*$). Suppose then $\mathcal{C}_r$ is a unit-derived $(n,r)$ code
 where $\mathcal{C}_r$ is defined using any $r$ rows of $F_n$ and the
 check matrix may be obtained directly from $F_n^*$ as explained
 above; the check matrix may also be obtained directly from $nF_n^*$ and this
 is often more convenient.

\begin{theorem}\label{distancemds} Suppose $n$ is prime. Then the 
 distance of $\mathcal{C}_r$ is $(n-r+1)$.
\end{theorem}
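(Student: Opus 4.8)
The plan is to reduce the statement to Theorem~\ref{bigdist} by checking that the matrix $V$ used to build the check matrix of $\mathcal{C}_r$ is, up to a nonzero scalar, a Fourier matrix on a prime number of points, so that Chebotar\"ev's Theorem~\ref{cheb} supplies the required nonvanishing of all its square subdeterminants.

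First I would recall the construction. By hypothesis $\mathcal{C}_r$ is the unit-derived code coming from the factorisation $F_nF_n^*=I_n$ in $\cc_{n\ti n}$: a generator matrix is formed from any $r$ of the rows of $U=F_n$, and the associated check matrix is formed, as in Section~\ref{units}, from the $n-r$ columns of $V=F_n^*$ indexed by the complementary set $J$ (equivalently, and as the text notes, from the corresponding columns of $nF_n^*$). Note that Theorem~\ref{bigdist} constrains only the matrix $V$ from which the check matrix is extracted, so nothing need be verified about $U=F_n$; it is enough to control the square submatrices of $V=F_n^*$.

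Next is the key step. Since $\al$ is a primitive $n$th root of unity in $\cc$, so is $\al^{-1}$, and the matrix $nF_n^*$ has $(i,j)$-entry $(\al^{-1})^{ij}$ for $0\le i,j\le n-1$; that is, $nF_n^*$ is precisely a Fourier matrix of the kind appearing in Theorem~\ref{cheb}. As $n$ is prime, Chebotar\"ev's Theorem~\ref{cheb} applies and gives that every square submatrix of $nF_n^*$ has nonzero determinant. A $k\ti k$ submatrix of $V=F_n^*$ is $n^{-k}$ times the corresponding submatrix of $nF_n^*$, and $n^{-k}\neq 0$ in $\cc$, so every square submatrix of $V$ also has nonzero determinant. (The same argument applied to $U=F_n$ itself shows all its square subdeterminants are nonzero as well, though this is not needed here.)

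Finally, $V=F_n^*$ satisfies the hypothesis of Theorem~\ref{bigdist}, and hence every code $\mathcal{C}_r$ obtained from $F_nF_n^*=I_n$ in this way has distance $n-r+1$, which proves Theorem~\ref{distancemds}. The only point requiring care — and the sole real obstacle — is the bookkeeping that identifies $nF_n^*$ (equivalently, up to the harmless scalar $1/n$, the matrix $V$) with a Fourier matrix to which Chebotar\"ev's hypothesis ``$n$ prime, entries $\om^{ij}$ with $\om$ a primitive $n$th root of unity'' genuinely applies; once this identification is made the conclusion is immediate from Theorems~\ref{cheb} and \ref{bigdist}.
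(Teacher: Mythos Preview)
Your proof is correct and follows the same route as the paper's: the paper simply says the result ``follows from Theorems~\ref{bigdist} and~\ref{cheb}'', and you have supplied exactly the bookkeeping that makes this citation work, namely that $V=F_n^*$ is $1/n$ times the Fourier matrix built from the primitive $n$th root $\al^{-1}$, so Chebotar\"ev's theorem applies to it and Theorem~\ref{bigdist} finishes.
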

\begin{proof} The proof follows from Theorems \ref{bigdist} and
  \ref{cheb}.

\end{proof}  

Thus any such $\mathcal{C}_r$ is an $(n,r,n-r+1)$ mds code when $n$ is
prime.  Any collection of $r$ rows of $F_n$ may be used to generate an
mds $(n,r,n-r+1)$ code. Hence there are ${n\choose r}$ mds
$(n,r,n-r+1)$ different codes derived from the single unit $F_n$.
 
Cyclic codes  using complete orthogonal sets of idempotents related to the
Fourier matrix are obtained in section \ref{four}. These will make it
easier to derive systems of mds codes over $\R$

Sections \ref{finite} deals with the construction of
series of mds codes over finite fields and later in section \ref{four}
series of cyclic such codes are constructed.

\subsubsection{Example} Let $\om$ be a primitive $7^{th}$ root of
$1$ in $\cc$. Consider $F_7 = \begin{pmatrix} 1&1&1&1&1&1&1 \\ 1 & \om
  &\om^2&\om^3&\om^4&\om^5&\om^6\\ 1&\om^2&\om^4&\om^6&\om&\om^3&\om^5
  \\ 1&\om^3&\om^6&\om^2&\om^5&\om&\om^4
  \\ 1&\om^4&\om&\om^5&\om^2&\om^6&\om^3\\ 1&\om^5&\om^3&\om&\om^6&\om^4&\om^2\\ 1
  &\om^6&\om^5&\om^4&\om^3&\om^2&\om \end{pmatrix}$.
 
Let $\mathcal{C}_4$ be the code generated by the following matrix:
$A=\begin{pmatrix} 1&\om&\om^2&\om^3&\om^4&\om^5&\om^6
\\ 1&\om^2&\om^4&\om^6&\om&\om^3&\om^5
\\ 1&\om^5&\om^3&\om&\om^6&\om^4&\om^2
\\ 1&\om^6&\om^5&\om^4&\om^3&\om^2&\om \end{pmatrix}$.

$A$ has $\rank 4$.  A check matrix 
for $\mathcal{C}_4$ is $\begin{pmatrix} 1&1&1&1&1&1&1 \\ 1&
  \om^4&\om&\om^5&\om^2&\om^6&\om^3
  \\ 1&\om^3&\om^6&\om^2&\om^5&\om&\om^4 \end{pmatrix} $. This has
$\rank 3$. 

The code $\mathcal{C}_4$ is a $(7,4,4)$ code. Indeed ${7\choose 4}=35$
different such codes may be derived from $F_7$. 

\subsection{Finite fields}\label{finite}
In the finite field case it is not true in all cases when $n$ is prime
that the Fourier matrix $F_n$, when it exists, has non-zero
determinant of each square submatrix.  The purpose now is find finite
fields $K$ and primes $p$ such the Fourier $F_p$ matrix over $K$ has
non-zero determinant of each square submatrix.
In order that the Fourier $p\ti p$ matrix over $K$ should exist,
it is necessary that ${\rm{char}}\, K \not | \, p$, and $p/(q-1)$
where $q$ is the order of the field $K$.   

Say a square matrix $M$ over the field $K$ has the {\em Chebotar\"ev
  property} if the determinant of any square submatrix is non-zero.
By \cite{isaacs1} if the characteristic of $K$ is $0$, the Fourier
matrix $F_n$ over $K$ has the Chebotar\"ev property for a prime $n$.

See the paper \cite{isaacs1} for details on the following.
$F[G]$ denotes the group ring of the group $G$ over the field $F$.
Let $z$ be a generator for the cyclic group $G$ of order a prime
$p$. Each vector $v \in F[G]$ is uniquely in the form $f(z)$, where
$f\in F[X]$ and $\deg f < p$. The quantity $t=t(v)$ which is
$|\supp(v)|$ is exactly the number of non-zero coefficients in the
polynomial $f$ and this number is written as $t(f)$.

Now $d(v)$ denotes the dimension of the space generated by $v$.
 
As shown in \cite{isaacs1} if $K$ is a field containing a primitive
$p^{th}$ root of unity, then the conclusion of Chebotar\"ev's theorem
over $K$ is equivalent to the assertion that $t(v)+d(v) > p$ for all choices
of nonzero vectors $v \in K[X]$. 
In  \cite{isaacs1} cases  
 of finite fields and primes $p$ with $t(v) + d(v) \leq p$ were found and
 this enabled the authors of \cite{isaacs1} to find examples where
 Chebotar\"ev's theorem fails in prime characteristic. 
The following theorem of \cite{isaacs1} gives necessary and
sufficient conditions for this failure to occur, where the conditions
are expressed in terms of the polynomial ring $K[X]$.  

\begin{theorem}\label{isaacs3}(Goldstein, Guralnick, Isaacs
  \cite{isaacs1}, (6.3) Theorem).  Let $G = \langle z \rangle$ be a
  group of prime order $p$ and suppose that $v \in K[G]$ is nonzero,
  where $K$ is an arbitrary field. Write $v = f(z)$, where $f \in
  K[X]$ and $\deg f < p$. Then $t(v) + d(v) \leq p$ if and only if
  $t(f) \leq \deg h$, where $h(X) = \gcd(X^p - 1, f(X))$.
\end{theorem}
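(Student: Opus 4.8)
The plan is to work directly in the group ring $K[G] = K[X]/(X^p-1)$ and track how the quantity $d(v)$, the dimension of the cyclic subspace generated by $v$, is controlled by the $\gcd$ polynomial $h(X) = \gcd(X^p-1, f(X))$. The first step is to identify $d(v)$ explicitly. Multiplication by $z$ is the shift operator on $K[G]$, so the space $\langle v\rangle$ spanned by $\{v, zv, z^2 v, \dots\}$ is exactly the ideal $(f(z)) = (f(X))$ in $K[X]/(X^p-1)$. Writing $g(X) = (X^p-1)/h(X)$, one shows this ideal equals $(h(z))$ — since $h = \gcd$, there are polynomials with $a f + b(X^p-1) = h$, so $h(z)$ lies in $(f(z))$, and conversely $h \mid f$ gives $(f(z)) \subseteq (h(z))$. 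The dimension of the principal ideal generated by $h(z)$ in $K[X]/(X^p-1)$ is $p - \deg h$ (its nonzero elements are the reductions of multiples of $h$ of degree $<p$, a space of dimension $p - \deg h$). Hence $d(v) = p - \deg h$, and the claimed inequality $t(v) + d(v) \le p$ becomes simply $t(f) \le \deg h$, which is the desired equivalence — modulo one subtlety I address below.

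So after this identification the statement looks almost tautological, but the genuine content is hidden in the direction $t(f) \le \deg h \ \Rightarrow\ $ we may take $f$ itself to already have $t(f)$ minimal among all representatives of $v$; in other words, $t(v)$ is computed using the reduced representative of degree $<p$, and one must make sure no ambiguity arises because $v$ has a unique such representative. That part is immediate from $\deg f < p$ being imposed in the hypothesis. The real asymmetry is this: $t(v) = t(f)$ always, but is it automatic that $t(f) \le \deg h$ is the right threshold and not, say, something involving $g = (X^p-1)/h$? Here one uses that $X^p - 1 = (X-1)(X^{p-1} + \cdots + 1)$ and, over any field, the second factor $\Phi_p$ need not be irreducible, so $h$ can be any monic divisor of $X^p-1$. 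The point is that $\langle v \rangle$ as a $K[G]$-module depends on $v$ only through $h$, so $d(v) = p - \deg h$ holds with no restriction on the factorization of $X^p-1$ over $K$ — this is why the theorem holds for an arbitrary field $K$.

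The main obstacle I anticipate is the forward implication combined with sharpness: proving that $t(v) + d(v) \le p$ \emph{forces} $t(f) \le \deg h$ and not merely $t(f) \le p - d(v)$ for some other reason. Concretely, from $d(v) = p - \deg h$ we get $t(v) + d(v) \le p \iff t(f) + (p - \deg h) \le p \iff t(f) \le \deg h$, so once the dimension formula $d(v) = p - \deg h$ is nailed down, both directions are simultaneous and the theorem follows. Thus essentially all the work is in the clean algebraic lemma
\[
\dim_K \langle f(z) \rangle = p - \deg \gcd(X^p-1, f(X)),
\]
valid in $K[X]/(X^p-1)$ over any field $K$, and I expect the write-up to spend most of its length establishing that identity (the two ideal inclusions plus the dimension count of a principal ideal in $K[X]/(X^p-1)$), after which the stated equivalence drops out by arithmetic.
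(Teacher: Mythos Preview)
The paper does not actually prove this theorem: it is quoted verbatim as Theorem~(6.3) of Goldstein--Guralnick--Isaacs \cite{isaacs1} and used as a black box, so there is no in-paper argument to compare against.

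On its own merits your argument is correct and is essentially the standard proof. The substantive step is the identity $d(v)=p-\deg h$, and your derivation of it is sound: in $K[X]/(X^p-1)$ the $K[G]$-submodule generated by $v=f(z)$ is the principal ideal $(f)$, the Bezout relation $af+b(X^p-1)=h$ together with $h\mid f$ gives $(f)=(h)$, and since $h\mid X^p-1$ the set $\{h,Xh,\ldots,X^{p-\deg h-1}h\}$ is a $K$-basis of $(h)$. With $d(v)=p-\deg h$ and $t(v)=t(f)$ (forced by the uniqueness of the representative with $\deg f<p$), the biconditional is, as you say, pure arithmetic. Your paragraph about ``the real asymmetry'' and whether $g=(X^p-1)/h$ might enter is unnecessary worry: once the dimension formula is established the equivalence is two-sided and immediate, and nothing about the factorization pattern of $X^p-1$ over $K$ is used beyond the fact that $K[X]$ is a PID. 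You could safely trim that discussion in a final write-up.
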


It is worth noting that the examples given in the paper
\cite{isaacs1}
(pages 4035-6) for
 which Chebotar\"ev Theorem fails use (distinct) primes $p,q$
 in which  the order of $q \mod p$ is less than
 $\phi(p)= p-1$;  this should be compared with Theorem \ref{nicely}
 below. 
We are interested in finite fields $K$ and primes $p$ for which the
Fourier matrix over $K$ exists and satisfies the Chebotar\"ev
condition.

The paper \cite{isaacs1} argues as follows to show that for each prime
$p$, there are only finitely many characteristics where Chebotar\"ev
can fail: 
``Consider the determinants of all square submatrices of the complex
matrix $[\zeta_{ij}]$, as in Theorem \ref{cheb}.  These are algebraic
integers, and they are nonzero by Chebotar\"ev's theorem, and so their
norms are nonzero rational integers. It should be reasonably clear
that the characteristics where the conclusion of Chebotar\"ev's can
fail are exactly the primes that divide at least one of these
integers, and clearly, there are just finitely many such primes.''
\subsection{Fourier matrix over finite fields}\label{four}
In order to construct the Fourier matrix $F_p$ over $GF(q)$ it is
necessary that $p /(q-1)$. For given unequal primes $p,t$ by Fermat's
little Theorem $p/(t^{\phi(p)}-1)$. As $p$ is prime, $\phi(p) =p-1$.
For given unequal primes $p,t$ there is a field $GF(t^r)$ such that
$p/(t^r-1)$ and the Fourier matrix $F_p$ exists over this field.

Let $p,q$ be unequal primes and $K=GF(q^{\phi(p)})$. Then
$p/(q^{\phi(p)}-1)$ and the Fourier matrix $F_p$ exists over $K$.

\begin{lemma}\label{primes} Let $p,q$ be unequal primes. 
Suppose the order of $q \mod p$ is $ \phi (p)$. Then $(x^{p-1}+ x^{p-2}+
\ldots + x+1)$ is irreducible over $GF(q)$.
\end{lemma}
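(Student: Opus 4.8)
The plan is to identify $(x^{p-1}+x^{p-2}+\cdots+x+1)$ as the $p$-th cyclotomic polynomial $\Phi_p(x)$ and to use the standard fact that the irreducible factors of $\Phi_p(x)$ over $GF(q)$ all have degree equal to the multiplicative order of $q$ modulo $p$. Since $p$ is prime, $\Phi_p(x) = (x^p-1)/(x-1) = x^{p-1}+\cdots+x+1$, so this polynomial has degree $p-1 = \phi(p)$; if its irreducible factors each have degree $\phi(p)$, there can be only one such factor, and hence $\Phi_p(x)$ is itself irreducible over $GF(q)$.

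First I would note that $\gcd(q,p)=1$ since $p,q$ are distinct primes, so $x^p-1$ is separable over $GF(q)$ and a primitive $p$-th root of unity $\al$ exists in some extension $GF(q^m)$ of $GF(q)$. Next I would recall that $\al \in GF(q^m)$ if and only if $\al^{q^m}=\al$, i.e. $\al^{q^m-1}=1$, which (since $\al$ has order exactly $p$) holds if and only if $p \mid (q^m-1)$, i.e. if and only if $q^m \equiv 1 \pmod p$. Thus the smallest field containing $\al$ is $GF(q^{d})$ where $d$ is the order of $q \bmod p$, and this $d$ is precisely $[GF(q)(\al):GF(q)]$, the degree of the minimal polynomial of $\al$ over $GF(q)$. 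By hypothesis $d = \phi(p) = p-1$.

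Finally I would assemble the pieces: every root of $\Phi_p(x)$ is a primitive $p$-th root of unity (as $p$ is prime, every $p$-th root of unity other than $1$ is primitive), and each such root has minimal polynomial over $GF(q)$ of degree $d = p-1$. Hence every irreducible factor of $\Phi_p(x)$ over $GF(q)$ has degree $p-1$. Since $\deg \Phi_p = p-1$ as well, $\Phi_p(x)$ must equal a single irreducible factor, so it is irreducible over $GF(q)$.

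I do not anticipate a serious obstacle here; the only point requiring a little care is the justification that the degree of the minimal polynomial of $\al$ over $GF(q)$ equals the order of $q$ modulo $p$ — this is the standard characterization of splitting fields of $x^n-1$ over finite fields, and I would cite it (or spell out the one-line Frobenius argument above) rather than reprove it in detail.
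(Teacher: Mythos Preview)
Your proposal is correct and follows essentially the same approach as the paper: both identify $x^{p-1}+\cdots+x+1$ with the cyclotomic polynomial $\Phi_p(x)$ and invoke the standard fact that over $GF(q)$ the irreducible factors of $\Phi_p$ all have degree equal to the order of $q \bmod p$, which here is $\phi(p)=p-1=\deg\Phi_p$. The only difference is that you spell out the Frobenius/minimal-polynomial justification for that fact, whereas the paper simply cites it.
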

\begin{proof} It is known that the cyclotomic polynomial $\Phi_n(x)$ 
factors over a finite field $GF(q)$ into irreducible polynomials of
degree $r$ where $r$ is the order of $q \mod n$. Here $\Phi_p(x) =
x^{p-1}+ x^{p-2} + \ldots + x +1 $ and $r= \phi(p)=p-1=\deg(\Phi_p(x))$
and so $\Phi_p(x)$ is irreducible.
\end{proof}

\begin{theorem}\label{nicely}
 Let $p,q$ be unequal primes and $K=GF(q^{\phi(p)})$.
 Suppose the order of $q \mod p$ is $\phi(p)$
 and (hence) that $f(x)=(x^{p-1} + x^{p-2}
 + \ldots + x + 1)$ is irreducible over $GF(q)=\mathbb{Z}_q$. Then the
 Fourier matrix $F_p$ exists over $K$ and satisfies the Chebotar\"ev
 condition.
\end{theorem}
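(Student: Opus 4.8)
The plan is to translate the Chebotar\"ev condition for $F_p$ over $K=GF(q^{\phi(p)})$ into the polynomial criterion of Theorem \ref{isaacs3}, and then show that the irreducibility of $\Phi_p$ over $GF(q)$ forces that criterion never to be met. First I would observe that since the order of $q \bmod p$ is $\phi(p)=p-1$, we have $p \mid q^{\phi(p)}-1$, so $K$ contains a primitive $p^{th}$ root of unity and the Fourier matrix $F_p$ over $K$ exists; this part is immediate from the divisibility remarks already made in section \ref{four}. The real content is the Chebotar\"ev property.

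Next I would invoke the equivalence recalled from \cite{isaacs1}: for a field $K$ containing a primitive $p^{th}$ root of unity, $F_p$ has the Chebotar\"ev property if and only if $t(v)+d(v) > p$ for every nonzero $v \in K[G]$, where $G=\langle z\rangle$ is cyclic of order $p$. By Theorem \ref{isaacs3}, the negation $t(v)+d(v)\le p$ holds for some nonzero $v=f(z)$ exactly when $t(f)\le \deg h$ with $h(X)=\gcd(X^p-1,f(X))$ computed in $K[X]$. So the goal reduces to showing: there is no nonzero $f\in K[X]$ with $\deg f < p$ and $t(f)\le \deg\gcd(X^p-1,f(X))$.

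Here is where irreducibility enters. Over $GF(q)$ we have the factorization $X^p-1=(X-1)\Phi_p(X)$ with $\Phi_p$ irreducible of degree $p-1$ by Lemma \ref{primes}. The subtlety is that $f$ and the gcd live over the larger field $K=GF(q^{p-1})$, not over $GF(q)$; however, $\Phi_p$ remains irreducible over $GF(q)$ and $X^p-1$ splits over $K$, so over $K$ the polynomial $X^p-1$ factors into $p$ distinct linear factors (as $p\ne\car K$). Thus $h(X)=\gcd(X^p-1,f(X))$ is, over $K$, a product of distinct linear factors $(X-\zeta)$ for $\zeta$ ranging over some subset of the $p^{th}$ roots of unity, say $\deg h = m$ of them. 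If $m=0$ then $t(f)\le 0$ forces $f=0$, excluded. If $m=p$ then $X^p-1\mid f$, impossible for $0\ne f$ of degree $<p$. So $1\le m\le p-1$, and the $m$ roots of $h$ being $p^{th}$ roots of unity, I would argue that any proper nonempty subset of the $p$ roots generates (via its vanishing polynomial over $GF(q)$) a set stable under the Frobenius $x\mapsto x^q$; but the Galois group of $GF(q^{p-1})/GF(q)$ acts on the nontrivial $p^{th}$ roots of unity as a group of order $p-1=\phi(p)$, i.e. transitively (this is precisely what irreducibility of $\Phi_p$ says). Hence the only Frobenius-stable subsets of the $p$ roots are $\emptyset$, the single root $\{1\}$, the full set of $p-1$ nontrivial roots, and the whole set. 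Since $h\mid f$ and $f\in K[X]$ need not be defined over $GF(q)$, I would instead use the weight bound directly: a nonzero polynomial of degree $<p$ divisible by $h$ has at least $\deg h + 1$ terms unless $h$ is of a very restricted shape. The clean way is: the minimal-weight nonzero multiple of $(X-\zeta_1)\cdots(X-\zeta_m)$ of degree $<p$ — using that the $\zeta_i$ are distinct roots of unity with $m<p$ — has weight strictly greater than $m$, which is exactly the statement $t(f)>\deg h$, contradicting $t(f)\le\deg h$. This minimal-weight-of-a-multiple claim is essentially Chebotar\"ev's theorem in its polynomial guise, and over $GF(q)$ it follows from lifting: since $\Phi_p$ is irreducible over $GF(q)$, the subfield $GF(q)(\zeta)=K$, and one can transport a hypothetical low-weight $f$ to characteristic zero by a suitable reduction/lifting argument, or more directly cite that the cases where Chebotar\"ev fails are governed by the order of $q\bmod p$ being strictly less than $p-1$ (as flagged in the remark after Theorem \ref{isaacs3}).

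The main obstacle, and the step I would spend the most care on, is precisely this last point: making rigorous the claim that when $\Phi_p$ is irreducible over $GF(q)$ — equivalently when the order of $q\bmod p$ equals $p-1$ — no nonzero $f\in K[X]$ of degree $<p$ can satisfy $t(f)\le\deg\gcd(X^p-1,f(X))$. The natural route is to reduce to $f$ defined over $GF(q)$ by noting that the gcd condition is Frobenius-equivariant and replacing $f$ by the product of its Frobenius conjugates does not increase weight relative to the gcd degree in the required way; alternatively, one pushes the whole configuration (the matrix $[\zeta_{ij}]$ over $K$) up to a number field, using that the complex Fourier matrix has all square-submatrix determinants nonzero (Theorem \ref{cheb}) and that the "bad" characteristics are exactly those dividing the norm of one of these determinants — then one checks that $\car K = q$ is not among them precisely because the order of $q\bmod p$ is maximal. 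I would present whichever of these is cleanest given the lemmas already available, most likely the polynomial/Frobenius-orbit argument since Theorem \ref{isaacs3} and Lemma \ref{primes} are already in hand.
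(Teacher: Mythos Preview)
Your reduction via Theorem \ref{isaacs3} to the statement ``no nonzero $g\in K[z]$ with $\deg g<p$ satisfies $t(g)\le\deg\gcd(z^p-1,g)$'' is exactly how the paper begins as well. The gap is in the main step, which you discuss but do not carry out.

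The Frobenius-orbit route you lean toward at the end does not close. As you already observe, $g$ lives over $K$, not over $GF(q)$, so the root set of $h=\gcd(z^p-1,g)$ is not Galois-stable in general and your classification of Frobenius-stable subsets of $\mu_p$ says nothing about it. Replacing $g$ by the product of its $p-1$ Frobenius conjugates does produce a polynomial over $GF(q)$, but the weight can grow multiplicatively (up to $t(g)^{p-1}$ terms) while the root set of the gcd only grows to the Galois closure of the original root set (at most by a factor $p-1$). No usable inequality between $t$ and $\deg h$ survives, and the product has degree $\ge p$ anyway, so Theorem \ref{isaacs3} no longer applies to it directly.

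The paper takes precisely the lifting route you mention only as an alternative. The role of the hypothesis is that irreducibility of $\Phi_p$ over $GF(q)$ makes $\mathbb{Z}[\omega]/(q)\cong GF(q)[Y]/\langle\Phi_p(Y)\rangle\cong K$, giving a surjection $\mathbb{Z}[\omega]\twoheadrightarrow K$ with $\omega\mapsto\alpha$. Given $g\in K[z]$ of degree $<p$, lift each nonzero coefficient to a nonzero preimage in $\mathbb{Z}[\omega]$ to obtain $\hat g\in\mathbb{Z}[\omega][z]\subset\mathbb{Q}(\omega)[z]$ with $t(\hat g)=t(g)$. In characteristic zero, Chebotar\"ev (equivalently Theorem \ref{isaacs3}) gives $t(\hat g)>\deg\hat h$ where $\hat h=\gcd(\hat g,z^p-1)$. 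The paper then identifies the root sets of $h$ and $\hat h$ via $\omega^j\leftrightarrow\alpha^j$ to get $\deg h=\deg\hat h$, hence $t(g)>\deg h$. So the argument you should have executed is this lift-and-compare, not the Frobenius descent; and the delicate point --- matching $\deg h$ with $\deg\hat h$ for a suitably chosen lift --- is exactly where the care you promised must be spent.
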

\begin{proof} It has already been noted that $F_p$ exists.

Now $GF(q^{\phi(p)}) \cong GF(q)[\al] \cong \mathbb{Z}_q[\al] \cong
\frac{\mathbb{Z}_p[x]}{(\langle f(x)\rangle)}$ where $\al$ is the
cofactor $x + \langle f(x) \rangle$.

For $\om$ a primitive $p^{th}$ root of $1$ in $\cc$, $\mathbb{Z}[\om]
\cong \frac{\mathbb{Z}[y]}{\langle f(y) \rangle }$. This gives the
natural map $\mathbb{Z}[\om]\cong \frac{\mathbb{Z}[y]}{\langle
  f(y)\rangle} \rightarrow \frac{\mathbb{Z}_p[x]}{\langle
  f(x)\rangle}=GF(q)[\al]$. The kernel of this map are polynomials of
degree less than $p$ in $y$ in which each coefficient is divisible by
$p$.

This mapping may be extended $ \frac{\mathbb{Z}[y]}{\langle
  f(y)\rangle}[z] \rightarrow \frac{\mathbb{Z}_q[x]}{\langle
  f(x)\rangle}[z]$.

Suppose now $g(z) \in \frac{\mathbb{Z}_q[x]}{\langle f(x)\rangle}[z]$
satisfies $\deg g < p$ and let $h(z)=\gcd(g(z),z^p-1)$. Consider then
$\hat{g}(z) \in \frac{\mathbb{Z}[y]}{\langle f(y)\rangle}[z]\subset
\frac{\mathbb{Q}[y]}{\langle f(y)\rangle}[z]$ with the pre-image of
the coefficients of $g(z)$ as the coefficients of $\hat{g}(z)$.  Let
then (in $\frac{\mathbb{Z}[y]}{\langle f(y)\rangle}[z]$)
$\hat{h}=\gcd(\hat{g}(z),z^p-1)$. Now by Theorem \ref{isaacs3}
$t(\hat{g}) > \deg \hat{h}$.

Let $\frac{\Z[y]}{\langle f(y)\rangle} = \Z[\om]$ where $\om$ is a
primitive $p^{th}$ root of $1$ (in $\cc$) and $\frac{\Z_q[x]}{\langle
  f(x)\rangle} = \Z_q[\al]$ where $\al$ is a primitive $p^{th}$ root
of $1$ in $\Z_q$.
 
Now in $z^p-1 = \prod_{i=0}^{p-1}(z-\om^i)$ in $\Z[\om]$ and $z^p-1 =
\prod_{i=0}^{p-1}(z-\al^i)$ in $GF(q^{p-1})=\Z_q[\al]$.

Thus in $\Z_p[\al]$, $\gcd(g(z), z^{p}-1) = \prod_{j\in J}(z-\al_j)=
h(z)$ where $J$ is a proper subset of $I=\{0,1,\ldots, p-1\}$.  In
$\Z[\om]$, $\gcd(\hat{g}(z), z^p-1) = \prod_{j\in
  J}(z-\om_j)=\hat{h}(z)$.

Hence $\deg \hat{h}(z)= \deg h(z)$.  Thus it is seen that since
$t(\hat{g}(z)) = t(g(z)$ and $\deg \hat{h}(z)= \deg h(z)$ and
$t(\hat{g}) > \deg \hat{h}$ that $t({g}(z)) > \deg h(z)$. Hence by
Theorem \ref{isaacs3} the Fourier matrix $F_p$ over $GF(q^{\phi(p)})$
satisfies Chebotar\"ev's condition.
\end{proof}

Thus fields $GF(q^{\phi(p)})$ with $p,q$ unequal primes where
the order of $q$ mod $p$ is $\phi(p)= p-1$, and (hence) where $x^{p-1}+
x^{p-2}+ \ldots + 1$ is irreducible over $GF(q)$ is such that  the Fourier
matrix $F_p$ over $GF(q^{\phi(p)})$ satisfies the Chebotar\"ev
property. There are clearly many such examples and particular ones 
are given in section \ref{Examples}.

\subsection{Germain type}
A prime $p$ is a {\em Germain prime} if $(2p+1)$ is also a prime. A {\em
  safe prime} is one of the form $(2p+1)$ where $p$ is prime. 

\begin{proposition}\label{ger} 
Suppose $p$ and $q=(2p+1)$ are primes. Then the Fourier matrix $F_p$
exists over $GF(q)$ and satisfies the Chebotar\"ev condition.
\end{proposition}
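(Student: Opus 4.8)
The existence claim is immediate: $q=2p+1$ gives $p\mid q-1=2p$ and $q\neq p$, so $GF(q)$ contains a primitive $p$-th root of unity $\al$, $p$ is invertible in $GF(q)$, and $F_p$ is defined over $GF(q)$. For the Chebotar\"ev property I would imitate the transfer argument in the proof of Theorem \ref{nicely}, but now exploiting that $q\equiv 1\pmod p$ makes $q$ split completely in $\Z[\om]$ (with $\om\in\cc$ a primitive $p$-th root of unity), rather than remain inert. Fix a prime $\mathfrak q$ of $\Z[\om]$ over $q$; its residue field is $GF(q)$ and, relabelling $\al$ among the primitive $p$-th roots of unity in $GF(q)$ if necessary, the quotient map $\pi:\Z[\om]\to GF(q)$ sends $\om\mapsto\al$. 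Extend $\pi$ coefficientwise to $\Z[\om][z]\to GF(q)[z]$.

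Now suppose the Chebotar\"ev property failed over $GF(q)$. By the reformulation of \cite{isaacs1} together with Theorem \ref{isaacs3}, there is a nonzero $g\in GF(q)[X]$ with $\deg g<p$ and $t(g)\le\deg h$, where $h=\gcd(X^p-1,g)$; writing $t=t(g)$, this says $g$ vanishes at $d\ge t$ distinct $p$-th roots of unity $\al^{j_1},\ldots,\al^{j_d}$ in $GF(q)$. Picking any $t$ of these, the $t\ti t$ submatrix $M$ of $F_p$ with those rows and with columns indexed by $\supp(g)$ is singular, and $M=\pi(\widetilde M)$ for the corresponding submatrix $\widetilde M$ of the complex Fourier matrix $[\om^{ij}]$. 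Hence $\mathfrak q\mid(\det\widetilde M)$ while $\det\widetilde M\neq 0$ by Theorem \ref{cheb}. Factoring the generalised Vandermonde determinant as $\det\widetilde M=\pm V\cdot s_\mu(\om^{j_1},\ldots,\om^{j_t})$, where $V=\prod_{k<l}(\om^{j_l}-\om^{j_k})$ and the Schur shape $\mu$ lies in the $t\ti(p-t)$ box determined by $\supp(g)$, and using that $V$ is, up to a unit, a product of factors $\om^m-1$ ($p\nmid m$) each of norm $\pm p$ and so coprime to $\mathfrak q$, everything reduces to showing $\mathfrak q\nmid\bigl(s_\mu(\om^{j_1},\ldots,\om^{j_t})\bigr)$.

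This last non-divisibility is the step I expect to be the real obstacle. Over $\cc$ it \emph{is} Chebotar\"ev's theorem: writing $s_\mu(\om^{j_1},\ldots,\om^{j_t})=\sum_{k=0}^{p-1}N_k\om^k$ with $N_k\ge 0$ and $\sum_k N_k=s_\mu(1,\ldots,1)$, linear independence of $1,\om,\ldots,\om^{p-2}$ forces the sum to be nonzero unless all $N_k$ agree, i.e.\ unless $p\mid s_\mu(1,\ldots,1)$, and the hook-content formula with $\mu\subseteq t\ti(p-t)$ makes every numerator factor of $s_\mu(1,\ldots,1)$ lie in $\{1,\ldots,p-1\}$, so $p\nmid s_\mu(1,\ldots,1)$. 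Modulo $\mathfrak q$ this collapses: $\al\in GF(q)$ satisfies the degree-one relation $X-\al$, the powers $1,\al,\ldots,\al^{p-2}$ span $GF(q)$, and only $1+\al+\cdots+\al^{p-1}=0$ survives, which does not force the $N_k$ to coincide. A correct proof would therefore have to feed in the actual Germain hypothesis — that the $p$-th roots of unity in $GF(q)$ are precisely the nonzero squares and that $q-1=2p$ has no further prime divisor — to preclude the sparse vanishing $\sum_k N_k\al^k\equiv 0\pmod q$.

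I should flag, however, that I do not see how $q=2p+1$ alone suffices here, and it appears it does not: for $p=11$, $q=23$ the elements $1,4,18$ are distinct $11$-th roots of unity in $GF(23)$ with $1+4+18=23\equiv 0$, so with $\mu=(1)$ (columns $\{0,1,3\}$ of $F_{11}$ and suitable rows) the resulting $3\ti 3$ submatrix has zero determinant. Thus the part of this plan I am confident in is the existence statement and the reduction of the Chebotar\"ev property to a question about vanishing sums of $p$-th roots of unity modulo $q$; the resolution of that question for $q=2p+1$ is where I would expect a genuine proof to live, or where the proposition would have to be narrowed.
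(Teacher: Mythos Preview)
Your suspicion is well founded and your counterexample is correct. With $p=11$, $q=23$, and $\om=2$ (a primitive $11$th root of unity in $GF(23)$), the $3\times 3$ submatrix of $F_{11}$ with rows $\{0,2,6\}$ and columns $\{0,1,3\}$ is
\[
\begin{pmatrix} 1&1&1\\ 1&4&18\\ 1&18&13\end{pmatrix},
\]
whose determinant over $\Z$ is $-253=-11\cdot 23$, hence zero in $GF(23)$. So the proposition as stated is false, and no proof can exist.

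The paper's argument breaks exactly where you located the difficulty. It lifts a polynomial $f\in GF(q)[x]$ to $\Z[x]$ and then asserts that the set $J$ of indices $j$ with $f(\al^{2j})=0$ in $GF(q)$ coincides with the set $\hat J$ of indices with $f(\om^j)=0$ in $\cc$ (the line ``Now $\hat J=J$''). That implication runs the wrong way: vanishing at $\om^j$ in $\Z[\om]$ forces vanishing at $\al^{2j}$ after reduction modulo a prime above $q$, but vanishing modulo $q$ certainly does not force vanishing in characteristic zero. In the counterexample above, $f(x)=1+x+x^3$ has $J\supseteq\{0,2,6\}$ in $GF(23)$ while $\hat J=\varnothing$ in $\cc$, so $\deg h$ drops from at least $3$ to $0$ upon lifting and the appeal to Theorem~\ref{isaacs3} over $\cc$ proves nothing about $GF(q)$. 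Your reduction to controlling vanishing sums $\sum N_k\al^k\equiv 0\pmod q$ is the honest formulation of what would be needed, and the example $1+4+18\equiv 0$ shows that the Germain hypothesis alone does not prevent such vanishing.
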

\begin{proof} 
Now $p/(q-1)$  and 
the order of $q \mod p$ is $1$. Let $\al$ be an element of order $2p=q-1$
in $GF(q)$. Then $\al^2$ has order $p$ and the Fourier matrix $F_p$ over $GF(q)$
then exists and can be constructed from powers of $\al^2$.   Let $f(x)$
be a polynomial of degree less than $p$ and consider $\gcd((x^p-1),
f(x)) = h(x)$ in $GF(q)$. Now in $GF(q)$, $ x^p-1 =
\prod_{i=0}^{p-1}(x-\al^{2i})$ as each $\al^{2i}, 0\leq i \leq (p-1)$
is a root of $x^p-1$. Hence $h(x) = \gcd((x^p-1),f(x)) = \prod_{j\in
  J}(x-\al^{2j}$ where $J\subseteq \{0,1,\ldots, p-1\}$. Let
$\om$ be a primitive $p^{th}$ root of $1$. Consider $(x^p-1), f(x)$ as
polynomials in $\Z[x]$. Now $t(f)$ in $GF(q)$ is the same as $t(f)$
in $\Z[x]$. Then $\gcd((x^p-1), f(x)) = h(x)$ satisfies
$t(f)> \deg(h(x)$ as elements in $\Z[x]$. 
Now $h(x) = \prod_{j\in \hat{J}} (x-\om^j)$ for
$\hat{J}\subseteq \{0,1,\ldots, (p-1)\}$. Now $\hat{J}=J$ and so $\deg
h(x)$ in $\cc[x]$ must be the same as $\deg(h(x)$ in $GF(q)[x]$. Hence
the Fourier matrix $F_p$ over $GF(q)$ satisfies the Chebotar\"ev
condition.    
\end{proof}


The Fourier matrices in these cases are particularly nice as they
consist of integers modulo a prime $q$.

\subsection{Examples}\label{Examples}

A Computer Algebra system such as 
 GAP \cite{gap},  MAPLE or MATLAB is  useful
for calculations. 

A circulant matrix is a matrix of the form $\begin{pmatrix}a_0 & a_1 &
  \ldots & a_{n-1} \\ a_{n-1} & a_ 0 & \ldots & a_{n-2} \\ \vdots &
  \vdots & \vdots & \vdots \\ a_1& a_2 &\ldots &
  a_0 \end{pmatrix}$. Thus $\cir(a_0,a_1,\ldots, a_{n-1})$ will denote
the circulant matrix with first row $(a_0,a_1,\ldots, a_{n-1})$.

\subsubsection{$GF(2^r)$}

\begin{enumerate}
\item $GF(2^2)$: The order of $ 2 \mod 3$ is $2$ and 
  $(x^2+x+1)$ is irreducible over $GF(2)$. Thus $F_3 = \begin{pmatrix} 1 &
  1 & 1 \\ 1 &\om & \om^2 \\ 1 & \om^2 & \om \end{pmatrix} $ has the
  Chebotar\"ev property where $\om$ is a primitive $3rd$ root of unity
  in $GF(4)$. This gives ${3\choose 2}= 3$ codes of type $(3,2,2)$.
\item $GF(2^4)$. The order of $2 \mod 5$ is $4$ and $(x^4+x^3+x^2+x+1)$
  is irreducible over $GF(2)$. Hence by Theorem \ref{nicely} the
  Fourier matrix $F_5$ exists over $GF(2^4)$ and satisfies
  Chebotar\"ev's condition that every square submatrix has
  determinant non-zero. Consider $F_5$.

Let $\al$ be a primitive element and define $\om = \al^3$.  Then
$F_5= \begin{pmatrix}1&1&1&1&1 \\ 1&\om & \om^2 & \om^3 & \om ^ 4 \\ 1
  & \om^2& \om^4 &\om & \om ^3 \\ 1 & \om^3&\om & \om^4&\om \\ 1 &
  \om^4 & \om^3&\om^2&\om \end{pmatrix}$ has the determinant of every
submatrix non-zero. We can use $F_5$ to define maximal distance
separable (mds) codes over $GF(16)$. So for example choosing $3$ of
the rows to get a generator matrix and then use the other two
corresponding of $F^*$ as check matrix gives $(5,3,3)$ codes. In total
this gives ${5\choose 3} = 10$ different $(5,3,3)$ codes.

\item $GF(2^6)$: Now $7/(2^6-1)$ and so the Fourier $F_7$ exists over
  $GF(2^6)$. However $(x^3+x+1)$, which is `missing a term', is a factor
  of $(x^7-1)$ and so $F_7$ does not satisfy Chebotar\"ev's
  condition. Here the order of $2 \mod 7$ is $3$ and $(x^3+x+1)$ is
  irreducible over $GF(2)$.
\item $GF(2^{10})$. The order of $2 \mod 11$ is $\phi(11)=10$ and
  $(x^{10}+x^9 + \ldots + x +1)$ is irreducible over $GF(2)$. Thus by
  Theorem \ref{nicely} the Fourier $F_{11}$ over $GF(2^{10})$ has the
  Chebotar\"ev property and mds codes may be constructed from it. For
  example 
 ${11 \choose 7} = 330$ mds 
  $(11,7,5)$ codes (of rate $\frac{7}{11}$) may be constructed over
  $GF(2^{10})$  and each of
  these is $2$-error correcting.
\item $GF(2^{12})$: The order of $2 \mod 13$ is $\phi(13)=12$ so by lemma
  \ref{primes} $(x^{12}+x^{11}+ \ldots + 1)$ is irreducible over
  $GF(2)$. Thus by Theorem \ref{nicely} $F_{13}$ over $GF(2^{12})$
  exists and satisfies Chebotar\"ev's condition. So for example this
  enables the construction of ${13 \choose 7} = 1716$ (different)
  codes of type $(13,7,7)$ in $GF(2^{12})$ which are then $3$-error
  correcting. 

\vdots

\end{enumerate}

 \subsubsection{$GF(3^r)$:}
\begin{enumerate}
\item $GF(3^4)$: The order of $3 \mod 5$ is
  $\phi(5)=4$ and so the polynomial $(x^4+x^3+x^2+x+1)$ is irreducible over
  $GF(3)$. The Fourier matrix $F_5$ over $GF(3^4)$ exists and 
  has the Chebotar\"ev property by Theorem \ref{nicely} from which mds
  codes can be constructed.
\item $GF(3^6)$: The order of $3 \mod 7$ is $6$ and 
$(x^6+x^5+ \ldots + x+1)$ is irreducible over $GF(3)$. Hence
  by Theorem \ref{nicely} $F_7$ exists and satisfies Chebotar\"ev's
  condition. This enables the construction of mds codes from
  $F_7$. For example ${7\choose 3}=35$ mds $(7,3,5)$ codes may be
  formed in $GF(3^6)$.
\item $GF(3^{16})$: The order of $3 \mod 17$ is $16$ and $(x^{16}+
  x^{15} + \ldots + x+ 1)$ is irreducible over $GF(3)$. Hence by
  Theorem \ref{nicely} $F_{17}$ satisfies Chebotar\"ev's
  condition. This enables the formation of mds codes from
  $F_{17}$. For example ${17 \choose 9}= 24310 $ mds codes $(17,9,
  9)$ and ${17 \choose 13}= 2380$ mds codes $(17,13,5)$ may be
  constructed from $F_{17}$ in $GF(3^{16})$.
\end{enumerate}

\subsubsection{$GF(5^r)$:} \begin{enumerate}
\item $GF(5^2)$: The order of $5 \mod 3$ is $2$ and $(x^2+x+1)$ is
  irreducible in $GF(5)$. Thus the Fourier $F_3$ exists in
  $GF(5^2)$ and has Chebotar\"ev property.
\item $GF(5^6)$: The order of $5 \mod 7$ is $6$ and
  $(x^6+x^5+x^4+x^3+x^2+x+1)$ is irreducible in $GF(5)$. Thus the
  Fourier matrix $F_7$ over $GF(5^6)$ exists and satisfies
  Chebotar\"ev's property. Hence for example it may be used to
  construct ${7\choose 4} = 35$ different mds $(7,4,4)$ codes over
  $GF(5^6)$ and indeed ${7\choose 5}= 21$ different $(7,5,3)$ codes
  over $GF(5^6)$.

 \end{enumerate}

\subsubsection{$GF(7^r)$}
\begin{enumerate}
\item $GF(7^4)$: The order of $7 \mod 5$ is $4$ and $(x^4+x^3+x^2+x+1)$
  is irreducible over $GF(7)$. Hence by Theorem \ref{nicely}, $F_5$
  exists over $GF(7^4)$ and satisfies Chebotar\"ev's condition. Hence
  mds codes may be constructed from $F_5$.
\item $GF(7^{10})$: The order of $7 \mod 11$ is $10$ and
  $(x^{10}+x^9+\ldots + x+ 1)$ is irreducible over $GF(7)$.  Thus by
  Theorem \ref{nicely} $F_{11}$ exists over $GF(7^{10}$ and satisfies
  Chebotar\"ev's condition.
\end{enumerate}

\subsubsection{$GF(11^r)$} 
\begin{enumerate}
\item 
$GF(11)$: Here $5/(11-1)$ and so the Fourier matrix $F_5$ exists over
  $GF(11)$.  Theorem \ref{nicely} cannot be applied as the
  irreducible factors of $(x^5-1)$ in $GF(11)$ are $\{x-1,x-\al^2,
  x-\al^4,x-\al^6,x-\al^{8}\}$, where $\al$ is a primitive element in
  $GF(11)$. (This $\al$ can be chosen to be $2$ as the order of $2
  \mod 11$ is $10$.)  However $5$ is a
  Germain prime (with safe prime $11=5\ti 2 +1$ and so the Proposition \ref{ger}
  may be applied.  

Thus the Fourier $F_5$ over $GF(11)$ has the Chebotar\"ev
property. From this mds codes many be constructed.  Here $2$ is a
primitive root and so $2^2=4$ has order $5$.
Thus then

$F_5 = \begin{pmatrix}1&1&1&1&1 \\ 1 & 4 & 4^2 &4^3&4^4 \\ 1 & 4^2
  &4^4 & 4&4^3 \\ 1 & 4^3 & 4& 4^4&4^2 \\ 1 & 4^4&4^3&4^2&4
       \end{pmatrix}=  \begin{pmatrix} 1&1&1&1&1 \\ 1&4 & 5 & 9& 3 \\ 1 & 5 & 3&4& 9 \\
	1& 9&4&3&5 \\ 1&3&9&5&4\end{pmatrix}$

is a Fourier matrix over $GF(11)$ which has the Chebotar\"ev
property. 
This gives for example ${5\choose 3} = 10$ mds codes $(5,3,3)$ over
$\mathbf{Z}_{11}$ which are $1$-error correcting.
\item $GF(23)$. Here $p=11$ is a Germain prime with safe prime
  $q=2p+1=23$. The Fourier matrix $F_{11}$ exists over $GF(23)$
  exists and by Proposition \ref{ger} it satisfies the Chebotar\"ev
  condition. In $GF(23)$ a primitive element is $5$ and so $5^2=2$ is
  an element of order $11$ from which the Fourier matrix $F_{11}$ over
  $GF(23)$ can be constructed. This gives

$F_{11} =\begin{pmatrix} 1 & 1 &1 & \ldots &1 \\ 1 & 2 & 2^2 & \ldots
  & 2^{10} \\ 1 & 2^2& 2^4 &\ldots & 2^{20} \\ \vdots & \vdots &
  \vdots &\vdots &\vdots \\ 1 & 2^{10}& 2^{20} & \ldots
  2^{100} \end{pmatrix} =\begin{pmatrix} 1 & 1 &1 & \ldots &1 \\ 1 & 2
  & 4 & \ldots & 12 \\ 1 & 4& 14 &\ldots & 6 \\ \vdots & \vdots &
  \vdots &\vdots &\vdots \\ 1 & 12& 6 & \ldots & 2 \end{pmatrix}$.


\item $GF(11^3)$: Now $7/(11^3-1)$ and $(x^7-1)$ has irreducible factors
  $(x-1), (x^3+\al^4 x^2+\al^2 x - 1), (x^3+\al^7x^2+\al^9x -1 )$ over
  $GF(11)$ where $\al$ is primitive. As pointed out in \cite{isaacs1}, 
  $F_7$ over $GF(11^3)$ does not have the the Chebotar\"ev property.

\item A large example: Consider $GF(227)$. The Fourier matrix
  $F_{113}$ exists over $GF(227)$ and by Proposition \ref{ger}
  satisfies the Chebotar\"ev property since $113$ is a Germain prime
  with matching safe prime $227$.   This for example enables the
  construction of ${113 \choose 57}$ (different) mds $(113,57,57)$ codes over
  $\Z_{227}$. The number ${113 \choose 57}$ is of order $10^{32}$ or
  $2^{109}$. Also for example ${113 \choose 99}$ high rate mds
  $(113,99,15)$ codes may be constructed over $\Z_{227}$ which are
  $7$-error correcting. This number
  ${113 \choose 99}$ is of order $10^{17}$ or $2^{57}$. 
\end{enumerate}


\section{Codes from complete orthogonal sets of idempotents}\label{above}
\subsection{Notation}
Let $R$ be a ring with identity $1_R =1$. In general $1$ will denote
the identity of the system under consideration. A {\em complete family
  of orthogonal idempotents} is a set $\{e_1, e_2, \ldots, e_k\}$ in
$R$ such that \\ (i) $e_i \not = 0$ and $e_i^2 = e_i$, $1\leq i\leq
k$;\\ (ii) If $i\not = j$ then $e_ie_j = 0$; \\ (iii) $1 = e_1+e_2 +
\ldots + e_k$.

The idempotent $e_i$ is said to be {\em primitive} if it cannot be
written as $e_i= e_i^{'}+ e_i^{''}$ where $e_i^{'},e_i^{''}$ are
idempotents such that $e_i^{'},e_i^{''} \neq 0$ and
$e_i^{'}e_i^{''}=0$. A set of idempotents is said to be {\em
  primitive} if each idempotent in the set is primitive.

Methods for constructing complete orthogonal sets of idempotents are
derived in \cite{hur3}.  Such sets always exist in $FG$, the group
ring over a field $F$, when $char F \not | \, |G|$. See \cite{seh} for
properties of group rings and related definitions. These idempotent
sets are related to the representation theory of $FG$.  Other methods
for constructing complete orthogonal sets of matrices 
such as from orthonormal bases are considered in \cite{hur5}.


\subsection{Rank}\label{grmat1}

\begin{lemma}\label{trrank} Suppose $\{E_1,E_2, \ldots, E_s\}$ is a
set of orthogonal idempotent matrices. Then $\rank (E_1+E_2 +\ldots +
E_s) = \tr (E_1+E_2+ \ldots + E_s) = \tr E_1+ \tr E_2+ \ldots + \tr
E_s = \rank E_1+ \rank E_2 + \ldots +\rank E_s$.
\end{lemma}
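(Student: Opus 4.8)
The plan is to establish the chain of equalities from the outside in, using two standard facts about idempotent matrices together with the orthogonality hypothesis. First I would recall that for a single idempotent matrix $E$ over a field, $\rank E = \tr E$: this follows because an idempotent is diagonalisable with eigenvalues $0$ and $1$ only (its minimal polynomial divides $x^2-x$), so the trace counts the multiplicity of the eigenvalue $1$, which is exactly the rank. Applying this to $E = E_1 + E_2 + \cdots + E_s$ will give the first equality $\rank(E_1+\cdots+E_s) = \tr(E_1+\cdots+E_s)$, provided I first check that this sum is itself idempotent; that check is immediate from the hypotheses, since $(\sum_i E_i)^2 = \sum_i E_i^2 + \sum_{i\neq j} E_iE_j = \sum_i E_i + 0 = \sum_i E_i$ using $E_i^2=E_i$ and $E_iE_j=0$ for $i\neq j$.

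Next, the middle equality $\tr(E_1+\cdots+E_s) = \tr E_1 + \cdots + \tr E_s$ is just linearity of the trace and needs no further comment. Finally, the last equality $\tr E_1 + \cdots + \tr E_s = \rank E_1 + \cdots + \rank E_s$ is obtained by applying the single-idempotent identity $\tr E_i = \rank E_i$ to each summand individually. Assembling these three links yields the full string of equalities in the statement.

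The only genuinely substantive point — and the one I would single out as the main obstacle to write cleanly — is the equality $\rank E = \tr E$ for an idempotent $E$: one must be slightly careful about the underlying field, since diagonalisability of $E$ requires that the eigenvalues $0,1$ lie in the field (which they do, being $0$ and $1$) and that the characteristic not interfere with counting multiplicities. Over any field this is fine because the Fitting-type decomposition $K^n = \ker E \oplus \operatorname{im} E$ holds for an idempotent (every $v$ writes as $(v - Ev) + Ev$, with $v-Ev \in \ker E$ and $Ev \in \operatorname{im} E$, and the intersection is zero), so in a basis adapted to this splitting $E$ is block-diagonal with an identity block of size $\rank E$ and a zero block; taking the trace of that block form gives $\tr E = \rank E$. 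I would present this decomposition argument rather than invoke diagonalisation, as it is characteristic-free and self-contained. Everything else is routine.
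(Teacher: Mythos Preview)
Your proof is correct and follows essentially the same route as the paper's: both arguments rest on the identity $\rank E = \tr E$ for an idempotent $E$, the observation that a sum of orthogonal idempotents is again idempotent, and linearity of trace. The only presentational differences are that the paper cites a reference for $\rank = \tr$ where you supply the Fitting-type decomposition argument, and the paper frames the idempotency-of-the-sum step as an induction (noting that if $\{E,F,G\}$ is orthogonal idempotent then so is $\{E+F,G\}$) whereas you verify it in one shot by expanding $(\sum_i E_i)^2$.
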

\begin{proof}
It is known that $\rank A = \tr A$ for an idempotent matrix, see for
example \cite{idemrank}, and so $\rank E_i = \tr E_i$ for each $i$. 
If
$\{E,F,G\}$ is a set of orthogonal idempotent matrices so is
$\{E+F,G\}$. 
From this it follows (by induction) that $\rank (E_1+E_2 +\ldots + E_s)
= \tr (E_1+E_2+ \ldots E_s)= \tr E_1+\tr E_2 + \ldots + \tr E_s =
\rank E_1+ \rank E_2 + \ldots \rank E_s$.
\end{proof}
\begin{corollary}\label{trrank1}
$\rank(E_{i_1}+ E_{i_2}+ \ldots + E_{i_k})= \rank E_{i_1} +\rank
  E_{i_2}+ \ldots + \rank E_{i_k}$ for $i_j \in \{ 1,2,\ldots, s\}$,
  $i_j\neq i_l$.
\end{corollary}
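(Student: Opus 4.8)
The plan is to derive Corollary~\ref{trrank1} as an immediate specialisation of Lemma~\ref{trrank}. The only gap between the two statements is that Lemma~\ref{trrank} is phrased for a complete list $\{E_1,\dots,E_s\}$ of orthogonal idempotents, whereas the Corollary asserts the rank-additivity for an \emph{arbitrary sub-collection} $\{E_{i_1},\dots,E_{i_k}\}$ indexed by distinct $i_1,\dots,i_k\in\{1,\dots,s\}$. So the first step is simply to observe that any sub-collection of a set of orthogonal idempotent matrices is itself a set of orthogonal idempotent matrices: conditions $E_{i_j}\neq 0$, $E_{i_j}^2=E_{i_j}$ and $E_{i_j}E_{i_l}=0$ for $j\neq l$ are inherited verbatim. (Completeness, condition (iii), is not needed for Lemma~\ref{trrank} at all --- that lemma only uses orthogonality and idempotency, not that the $E_i$ sum to the identity --- so nothing is lost by passing to a proper subset.)

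Having made that observation, I would apply Lemma~\ref{trrank} directly to the relabelled set $\{F_1,\dots,F_k\}:=\{E_{i_1},\dots,E_{i_k}\}$, which gives
\[
\rank(E_{i_1}+\cdots+E_{i_k}) \;=\; \rank F_1+\cdots+\rank F_k \;=\; \rank E_{i_1}+\cdots+\rank E_{i_k},
\]
which is exactly the claim. If one prefers to avoid even the (trivial) relabelling, one can instead note that $\{E_{i_1},\dots,E_{i_k}\}$ can be extended back to a larger orthogonal set and run the induction from the proof of Lemma~\ref{trrank} only over the chosen indices; but the cleanest route is the one-line reduction just described.

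There is essentially no obstacle here: the entire content is the remark that the hypotheses of Lemma~\ref{trrank} are closed under passing to sub-collections. If I wanted to be maximally explicit I would spell out the inheritance of properties (i) and (ii) and then quote Lemma~\ref{trrank}; in the paper itself a single sentence --- ``Any subset of a set of orthogonal idempotent matrices is again such a set, so this is immediate from Lemma~\ref{trrank}'' --- suffices.
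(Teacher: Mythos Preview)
Your proposal is correct and matches the paper's approach: the paper gives no separate proof for the corollary, treating it as an immediate consequence of Lemma~\ref{trrank}, which is exactly the one-line reduction you describe. Your observation that Lemma~\ref{trrank} assumes only orthogonality and idempotency (not completeness) is precisely the point, so a sub-collection inherits the hypotheses and the lemma applies verbatim.
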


\subsection{The codes}\label{above1}

Let $\{E_1, E_2, \ldots, E_k\}$ be a complete orthogonal set
of idempotents in $F_{n\ti n}$ and suppose $\rank E_i = r_i$ with then
$\sum_{i=1}^k r_i = n$. Let $I=\{1,2\ldots, k\}$ and suppose $J\subseteq
I$. Then by Lemma \ref{trrank} $\rank (\sum_{j\in J}E_j)= \sum_{j\in
  J} \rank (E_j)$.

Let $G=(E_1+E_2+ \ldots + E_s)$ with $s< k$ and $H=(E_{s+1}+\ldots +
E_k)$.  Let $r= \rank G = (r_1+r_2+\ldots+r_s)$, and then $(n-r)=\rank
H=(r_{s+1}+r_{s+2}+ \ldots +r_{k})=(n-r)$. Note that $GH=0$.

Let $\mathcal{C}_s$ denote the code with generator matrix $G$ and
check matrix $H\T$. Then $\mathcal{C}_s$ is an $(n,r)$ code.

\begin{lemma}\label{sum} Let $A\in F_{n\ti n}$. Then $AH=0$ if and only if $AE_i=0$
 for $i=s+1, s+2, \ldots, k$.
\end{lemma}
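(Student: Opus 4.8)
The plan is to prove the two implications separately, using the orthogonality and completeness of the idempotent family together with the identity $H = E_{s+1} + E_{s+2} + \cdots + E_k$.

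The easy direction is the ``if'' part: if $AE_i = 0$ for each $i = s+1, \ldots, k$, then summing gives $AH = A(E_{s+1} + \cdots + E_k) = AE_{s+1} + \cdots + AE_k = 0$ directly by linearity. No further work is needed here.

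For the ``only if'' direction, suppose $AH = 0$. The key observation is that each $E_i$ with $s+1 \le i \le k$ can be recovered from $H$ by right multiplication: since $\{E_1, \ldots, E_k\}$ is an orthogonal set of idempotents, for such an $i$ we have $H E_i = (E_{s+1} + \cdots + E_k)E_i = E_i^2 = E_i$, because $E_j E_i = 0$ for $j \neq i$ and $E_i^2 = E_i$. Therefore $A E_i = A(H E_i) = (AH)E_i = 0 \cdot E_i = 0$ for each $i = s+1, \ldots, k$, which is exactly what we want.

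I do not anticipate any real obstacle; the only point requiring a little care is making sure the idempotents in the tail $H$ are genuinely orthogonal to one another (so that $HE_i = E_i$), which is immediate from the definition of a complete orthogonal set. One could equivalently phrase the argument via $G = 1 - H$ and note $A E_i = 0$ for $i \le s$ need not hold, but the cleanest route is the one above using $H E_i = E_i$.
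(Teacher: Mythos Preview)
Your proof is correct and follows essentially the same approach as the paper: for the nontrivial direction you right-multiply $AH=0$ by $E_i$ and use $HE_i=E_i$ (from orthogonality and idempotence), exactly as the paper does. The easy direction is handled identically by summing.
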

\begin{proof} Suppose $AH=0$. Multiply through on the right by $E_i$ for
 $s+1\leq i \leq k$. Then $AE_i=0$ as $E_iE_i=E_i$ and $E_iE_j=0$ for
  $i\neq j$. On the other hand if $AE_i$ for $i=s+1, s+2,\ldots,k$
  then clearly $AH=0$.
\end{proof}


Any $s$ elements of $\{E_1, E_2, \ldots, E_k\}$ can be used as a
generator matrix and then the other $(k-s)$ elements give the check
matrix. The ranks are determined by the ranks of the elements chosen.
Any complete orthogonal set of idempotents may be used and the reader
is referred to \cite{hur5} for general constructions of these. Here we
stick to cases related to the idempotents in the cyclic group
ring. 

Now suppose  $S=\{E_1, E_2, \ldots, E_n\}$ is a complete orthogonal set
of idempotents in $K_{n\ti n}$ where each $E_i$ has $\rank 1$. 
In this case it can be seen that choosing $r$
elements gives a $(n,r)$ code with the generator matrix given by the
sum of these $r$ elements and the check matrix given by (the transpose
of) the sum of the other $(n-r)$ elements. Each choice of the $r$
elements gives a different $(n,r)$ code so the set-up gives
$\binom{n}{r}$ different $(n,r)$ codes.

\subsection{Distances attained} Suppose now that $S=\{E_1,E_2, \ldots, E_n\}$ is
a complete 
 orthogonal set of idempotents in $F_{n\ti n}$ for
a field $F$. Let $F_n$ be the $n\ti n$ matrix consisting of the first
columns of each of $\{E_1,E_2, \ldots , E_n\}$.

Let $G$ be the matrix consisting of the sum of $r$ elements of $S$ and
let $H$ be the sum of the other $(n-r)$ elements of $S$. Then as
explained in section \ref{above1} this defines an $(n,r)$ code say
$\mathcal{C}_r$ with generator matrix $G$ and check matrix $H\T$.
\begin{theorem}\label{bigdist1}
 Suppose the determinant of any square submatrix of $F_n$ is
 non-zero. Then any such code $\mathcal{C}_r$ has distance $(n-r+1)$
 and is thus an $(n,r,n-r+1)$ mds code.
\end{theorem}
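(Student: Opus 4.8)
The plan is to reduce Theorem~\ref{bigdist1} to the unit-derived case already treated in Theorem~\ref{bigdist}. First I would observe that a complete orthogonal set of rank-one idempotents $S=\{E_1,\ldots,E_n\}$ gives rise in a canonical way to a unit of $F_{n\ti n}$. Indeed, since each $E_i$ has rank one, write $E_i = c_i\, r_i$ where $r_i$ is a row vector (a $1\ti n$ matrix) and $c_i$ is a column vector ($n\ti 1$); the orthogonality relations $E_iE_j=\delta_{ij}E_i$ translate into $r_i c_j = \delta_{ij}$ (after normalising so that $r_ic_i=1$, which is forced since $E_i$ is a nonzero idempotent), and completeness $\sum_i E_i = I$ says $\sum_i c_i r_i = I$. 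Hence if $U$ is the matrix whose $i$th row is $r_i$ and $V$ is the matrix whose $i$th column is $c_i$, then $UV = (r_ic_j)_{i,j} = I$ and $VU = \sum_i c_i r_i = I$, so $U,V$ are mutually inverse units. The matrix $F_n$ of the statement, consisting of the first columns of the $E_i$, has $i$th column equal to the first column of $E_i = c_ir_i$, namely $(r_i)_1\, c_i$, i.e. $F_n$ is $V$ with its $i$th column scaled by the scalar $(r_i)_1$.

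Next I would match the code $\mathcal{C}_r$ of Theorem~\ref{bigdist1} with a unit-derived code in the sense of Section~\ref{units}. If $G = \sum_{i\in L}E_i$ for some $L\subseteq\{1,\ldots,n\}$ of size $r$ and $H = \sum_{i\notin L}E_i$, then $G = \big(\sum_{i\in L} c_ir_i\big)$ has the same row space as the $r\ti n$ submatrix $A$ of $U$ formed by the rows $\{r_i : i\in L\}$ — because $G c_j = c_j$ for $j\in L$ and $Gc_j=0$ for $j\notin L$ shows $\operatorname{rank}G=r$ and $r_iG = r_i$ for $i\in L$, so each such $r_i$ lies in the row space of $G$, and conversely every row of $G$ is a combination of the $r_i$, $i\in L$. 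Likewise $\langle\text{rows of }G\rangle = \langle\text{rows of }A\rangle$, so $\mathcal{C}_r$ equals the $(n,r)$ code generated by $A$. Its check matrix is $H\T$; but $A D = 0$ where $D$ is the $n\ti(n-r)$ matrix with columns $\{c_j : j\notin L\}$, since $r_ic_j=0$ for $i\in L$, $j\notin L$, and $D$ has rank $n-r$. Thus $\mathcal{C}_r$ is exactly the unit-derived code built from the unit $U$ (with inverse $V$) by selecting rows $L$, whose check matrix is $D\T$ in the notation of Section~\ref{units}.

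It then remains to invoke Theorem~\ref{bigdist}: the conclusion there requires that every square submatrix of $V$ have nonzero determinant. Since $V$ and $F_n$ differ only by scaling each column by a nonzero scalar $(r_i)_1$ — nonzero because if the first entry of $r_i$ vanished we could not have the full collection of first columns of the $E_i$ forming a matrix with all square submatrices nonsingular (in particular the $1\ti 1$ ones); more directly, the hypothesis that all square submatrices of $F_n$ are nonsingular already forces every entry of $F_n$, hence every $(r_i)_1 \cdot (\text{entry of }c_i)$, to be nonzero, so in particular $(r_i)_1\neq 0$ — scaling columns by nonzero scalars multiplies the determinant of any square submatrix by a nonzero factor and therefore preserves the property. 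Hence every square submatrix of $V$ is nonsingular, Theorem~\ref{bigdist} applies to the unit $UV=I$, and $\mathcal{C}_r$ has distance $n-r+1$, i.e. it is an $(n,r,n-r+1)$ mds code.

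The main obstacle I anticipate is the bookkeeping in the first step: one must be careful that the rank-one factorisation $E_i=c_ir_i$ can be normalised consistently so that $r_ic_j=\delta_{ij}$ simultaneously for all $i,j$ (the scalars $(r_i)_1$ that relate $V$ to $F_n$ are then exactly the freedom left after this normalisation), and that the column-scaling relating $F_n$ to $V$ is by \emph{nonzero} scalars — a point that, as noted, follows from the hypothesis on $F_n$ itself but should be stated explicitly. Once the dictionary ``complete orthogonal set of rank-one idempotents $\leftrightarrow$ unit $U$ with inverse $V$, with $F_n$ a column-rescaling of $V$'' is set up cleanly, the theorem is an immediate corollary of Theorem~\ref{bigdist}.
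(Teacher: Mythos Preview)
Your proof is correct, but it takes a genuinely different route from the paper's own argument. The paper proves Theorem~\ref{bigdist1} directly: given a codeword $u$ of weight at most $n-r$, it uses Lemma~\ref{sum} to get $uE_{j_i}=0$ for each of the $n-r$ idempotents in $H$, then reads off only the \emph{first} column of each relation $uE_{j_i}=0$; restricting to the (at most) $n-r$ support positions of $u$ yields a homogeneous square system whose coefficient matrix is an $(n-r)\times(n-r)$ submatrix of $F_n$, hence nonsingular, forcing $u=0$. Your argument instead builds the structural dictionary: factor $E_i=c_ir_i$, assemble the unit $U$ (rows $r_i$) with inverse $V$ (columns $c_i$), identify $\mathcal{C}_r$ with the unit-derived code on the rows $\{r_i:i\in L\}$, observe that $F_n$ is $V$ with columns rescaled by the nonzero scalars $(r_i)_1$, and then simply invoke Theorem~\ref{bigdist}. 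What you gain is an explicit proof that the idempotent codes and the unit-derived codes coincide --- a fact the paper asserts later in the ``Equality'' subsection but with details omitted --- so your argument is more informative structurally. What the paper's proof buys is self-containment and brevity: it uses only Lemma~\ref{sum} and the definition of $F_n$, and never needs to check the normalisation $r_ic_j=\delta_{ij}$ or the nonvanishing of the scaling factors $(r_i)_1$ (the point you rightly flag as the delicate step, and which you correctly deduce from the $1\times 1$ case of the hypothesis on $F_n$).
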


\begin{proof} Suppose $u=(u_1,u_2,\ldots, u_n) \in \mathcal{C}_r$ has support at most
  $(n-r)$. Thus $u$ has entry $0$ in $r$ places. Suppose $u$ has entry
  $0$ except (possibly) at places $\{u_{k_1}, u_{k_2},\ldots,
  u_{k_{n-r}}\}$. Define $\hat{u} = (u_{k_1},u_{k_2}, \ldots,
  u_{k_{n-r}})$.

Let $H=E_{j_1}+E_{j_2}+\ldots E_{j_{n-r}}$.  Now $uH=0$ and so by
Lemma \ref{sum}, $uE_{j_i}=0$ for $i=1,2,\ldots, (n-r)$.

Let the $k_l^{th}$ entry of the column of $E_{j_t}$ be denoted by
$E_{j_{t_l}}$. Then $\di\sum_{l=1}^{n-r}u_{k_l}E_{j_{i_l}} = 0$ for
$i=1,2,\ldots, (n-r)$. Let $T_i$ denote the column
$(E_{j_{i_1}},E_{j_{i_2}}, \ldots, E_{j_{i_{n-r}}})\T$.  Then this
says that $\hat{u}T_i = 0$ for $i=1,2,\ldots,(n-r)$. Hence
$\hat{u}(T_1,T_2,\ldots, T_{n-r}) = 0$.

Let $A$ be the $(n-r)\ti (n-r)$ matrix $(T_1,T_2,\ldots,
T_{n-r})$. This is a square submatrix of $F_n$ and so its determinant
is non-zero. Hence $\hat{u} = 0$ and so $u=0$.

\end{proof}

\subsection{Cyclic case}\label{7}

Let $N=\{E_0,E_1, \ldots, E_{n-1}\}$ be the primitive orthogonal
complete set of idempotents obtained from the cyclic group $C_n$ of
order $n$ in $\cc$.  Take $E_i=\cir(\om^i, \om^{2i},
\ldots \om^{(n-1)i})$ where $\om$ is a primitive $n^{th}$ root of $1$.

Let $\mathcal{C}_r$ be the code with generator matrix $G=(E_0+ E_1+
\ldots + E_{r-1})$ and check matrix (the transpose of) $H=E_r+E_{r+1}+ \ldots
+E_{n-1}$. Then $G$ has rank $r$ and $H$ has rank $(n-r)$ by Lemma
\ref{trrank} and so $\mathcal{C}_r$ is a $(n,r)$ code. The first $r$
rows of $G$ are independent and the first $(n-r)$ rows of $H$ are
independent by results in \cite{hur1,hur2}. Hence the first
$r$ rows of $G$ can be taken as the generator matrix. Similarly the
first $(n-r)$ rows of $H\T$ can be taken as the check matrix of
$\mathcal{C}_r$.

More generally choose the sum of any $r$ of $S=\{E_0,E_2,\ldots,
E_{n-1}\}$ to form a generator matrix $G_r$ of a code $C_r$ of size $(n,r)$
and the sum of the remaining $(n-r)$ elements give the matrix
$H_{n-r}$ where $H_{n-r}\T$ is a check matrix. As explained the first
$r$ rows of $G_r$ are linearly 
independent and these may be taken as the generator matrix of this
cyclic code and the first $(n-r)$ rows of $H_{n-r}\T$ may be taken as a
check matrix.
\begin{theorem}\label{distancemds2} Suppose $n$ is prime. Then the 
 distance of $\mathcal{C}_r$ is $(n-r+1)$.
\end{theorem}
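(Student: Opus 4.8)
The plan is to reduce Theorem \ref{distancemds2} to the case already handled in Theorem \ref{bigdist1}. The code $\mathcal{C}_r$ is built from a complete orthogonal set of rank-one idempotents $S=\{E_0,E_1,\ldots,E_{n-1}\}$ in $\cc_{n\ti n}$, with generator matrix $G_r$ the sum of $r$ chosen idempotents and check matrix $H_{n-r}\T$ where $H_{n-r}$ is the sum of the remaining $(n-r)$. By Theorem \ref{bigdist1}, once we know that the matrix $F_n$ formed from the first columns of $E_0,E_1,\ldots,E_{n-1}$ has the property that every square submatrix has nonzero determinant, we immediately get that $\mathcal{C}_r$ has distance $(n-r+1)$ and is mds. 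So the whole content is: identify the matrix-of-first-columns $F_n$ and invoke Chebotar\"ev.

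First I would write down $F_n$ explicitly. Since $E_i=\cir(\om^i,\om^{2i},\ldots,\om^{(n-1)i})$ (with the convention that the first entry is $\om^0=1$ when read cyclically, or more precisely the idempotent attached to the character $z\mapsto\om^i$), its first column is the vector $(1,\om^{i},\om^{2i},\ldots,\om^{(n-1)i})\T$ up to the normalising scalar $1/n$. Stacking these as columns for $i=0,1,\ldots,n-1$ gives, up to the overall scalar $\tfrac1n$, exactly the Fourier matrix $F_n$ with $(k,i)$-entry $\om^{ki}$. An overall nonzero scalar does not change which subdeterminants vanish, so $F_n$ here has the Chebotar\"ev property if and only if the genuine Fourier matrix does. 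The one bookkeeping point to get right is the precise indexing of the circulant entries and the first-column convention, so that the claimed identification with the Fourier matrix is literally correct rather than merely up to a permutation of rows (a permutation would also be harmless, but it is cleaner to state it exactly).

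Next, since $n$ is prime and $\om$ is a primitive $n^{th}$ root of unity in $\cc$, Chebotar\"ev's Theorem \ref{cheb} applies directly: every square submatrix of the Fourier matrix has nonzero determinant. Hence the same holds for our $F_n$, the hypothesis of Theorem \ref{bigdist1} is satisfied, and we conclude that $\mathcal{C}_r$ has distance $(n-r+1)$. I would also remark that the first $r$ rows of $G_r$ and the first $(n-r)$ rows of $H_{n-r}\T$ being independent (already noted from \cite{hur1,hur2}) is what makes $\mathcal{C}_r$ a genuine $(n,r)$ code, so the Singleton bound gives the matching upper bound and the code is mds.

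The main obstacle is essentially none of a deep nature; the only thing to be careful about is the explicit verification that the matrix of first columns of the cyclic idempotents equals (a nonzero scalar multiple of, or a row-permutation of) the Fourier matrix $F_n$, so that Chebotar\"ev applies verbatim. This is a short computation with the explicit formula for the primitive idempotents of $\cc C_n$, but it must be done correctly since the whole proof hinges on it. Once that identification is in place, the proof is just ``apply Theorem \ref{bigdist1} together with Theorem \ref{cheb}'', exactly parallel to the proof of Theorem \ref{distancemds}.
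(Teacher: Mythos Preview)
Your proposal is correct and follows essentially the same route as the paper: the paper's proof is the single sentence ``The proof follows from Theorems \ref{bigdist1} and \ref{cheb}'', and you have simply unpacked the implicit step, namely the identification of the matrix of first columns of the cyclic idempotents $E_i$ with (a nonzero scalar multiple of) the Fourier matrix $F_n$, so that Chebotar\"ev's theorem applies. Your additional care about the scalar $1/n$ and the indexing convention is exactly the bookkeeping the paper leaves to the reader.
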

\begin{proof} The proof follows from Theorems \ref{bigdist1} and
  \ref{cheb}.

\end{proof}  

The codes constructed in this section \ref{7} are  
cyclic codes and are also ideals in the group ring of the cyclic group. 

Note that using these orthogonal sets of idempotents it is then easy
to construct mds codes over $\mathbb{R}$ by combining complex
conjugate idempotents when constructing the generator matrix. This is
illustrated in the following examples. By using complete orthogonal
sets of idempotents in $\mathbb Q_{n\ti n}$, codes over $\mathbb Q$ may be obtained.

\subsubsection{Codes from idempotents, examples}


Consider from $\cc C_5$ the following complete orthogonal set of
idempotent giving $\{E_0,E_1,E_2,E_3,E_4\}$ with $E_0 =
\frac{1}{5}\cir(1,1,1,1,1), E_1= \frac{1}{5}\cir(1,\om,\om^2,\om^3,\om^4),
E_2=\frac{1}{5}\cir(1,\om^2,\om^4,\om,\om^3), \\ E_3=\frac{1}{5}
\cir(1,\om^3, \om, 
\om^4,\om^2), E_4=\frac{1}{5}\cir(1,\om^4,\om^3,\om^2,\om)$.

If we choose $U=(E_0+E_1+E_2)$ as a generator matrix of a code
$\mathcal{C}$ then $V=(E_3+E_4)$ gives the check matrix $V\T$ 
of $\mathcal{C}$.
By Theorem \ref{distancemds2} this code is a $(5,3,3)$ code.  The first
three rows of $U$ are linearly independent and constitute the
generator matrix. The first two columns of $V$ are
linearly independent and  any $2\times 2$
submatrix has $\det \neq 0$ which gives the distance $3$.
The generator matrix is  $U=(E_0+E_1+E_2) = \frac{1}{5}\cir(3, 1+\om + \om^2, 1+\om^2+\om^4,
1+\om^3+\om, 1+\om^2 +\om^3)$.

Suppose we wish to generate a real $(5,3,3)$ code from $\{E_0, E_1,
E_2,E_3,E_4\}$.  It is noted that $\{E_1, E_4\}$ and $\{E_2,E_3\}$ 
consist of pairs whose sums are real and that $E_0$ is real. Consider
$G=(E_0+E_1+E_4)$ as 
the generator matrix and $H=(E_2+E_3)$ as the transpose of the 
check matrix. Both $G$ and
$H$ are real and thus get a real $(5,3,3)$ code.

\subsection{Over finite fields} We may now use Theorem \ref{bigdist1}
and analogies of Theorem \ref{nicely} and Proposition \ref{ger} 
to construct series of cyclic mds codes over
finite fields. Note that if $\{F_1,F_2,\ldots, F_k\}$ are cyclic
(circulant) orthogonal idempotent matrices and $\rank F_1+\rank F_2+
\ldots + \rank F_k = r$ then also $G=(F_1+F_2+\ldots + F_k)$ is circulant and
the first $r$ rows of $G$ are linearly independent; this follows for
example from \cite{hur1}. Thus generator and check matrices of the
$(n,r)$ codes produced are obtained from the $(n\ti n)$ matrices by
using the first $r$ rows for the check matrix of the (natural)
generator matrix and the first $(n-r)$ rows of the check matrix.

Consider then as in Section \ref{four} two unequal prime $p,q$ and
$GF(q^{\phi(p)})$ where the order of $q \mod p$ is $\phi(p)$ and
$x^{p-1} + x^{p-2}+ \ldots + x+1$ is irreducible over $GF(q)$. In
these cases by Theorem \ref{orthog} the Fourier matrix $F_p$ over
$GF(q^{\phi(p)})$ exists and satisfies Chebotar\"ev's condition.

\begin{theorem}\label{orthog} Let $p,q$ be unequal primes and
  $K=GF(q^{\phi(p)})$.
Suppose  the order of $q \mod p$ is $\phi(P)$ and (hence) that $(x^{p-1} +
  x^{p-2}+ \ldots + x +1) $ is irreducible over $GF(q)$. Let 
  $\om$ be a primitive $p^{th}$ root of $1$ in $K$. Define, (in
  $K_{p\ti p}$),  $E_i=\frac{1}{p}\cir(1,\om^i,\om^{2i}, \ldots,
  \om^{(p-1)i})$ for $i=0,1,\ldots, (p-1)$.  Then $S=\{E_0,E_1, \ldots,
  E_{p-1}\}$ is a complete orthogonal set of idempotents (each
  $E_i$ has $\rank 1$) and  the
  codes produced using any subset of $S$ are
  cyclic mds codes.
\end{theorem}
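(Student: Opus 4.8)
The plan is to decompose the statement of Theorem~\ref{orthog} into two independent assertions and dispatch each by invoking earlier results. The first assertion is that $S=\{E_0,E_1,\ldots,E_{p-1}\}$ is a complete orthogonal set of idempotents with each $E_i$ of $\rank 1$; the second is that every code obtained from a subset of $S$ is an mds code. For the first, I would observe that the defining relation $E_i=\frac{1}{p}\cir(1,\om^i,\om^{2i},\ldots,\om^{(p-1)i})$ is precisely the standard expression for the primitive idempotents of the cyclic group algebra $K[C_p]$, valid whenever $K$ contains a primitive $p$-th root of unity and $\car K\nmid p$ (both of which hold here, since $\car K=q\ne p$ and $\om\in K$ by irreducibility of $\Phi_p$ over $GF(q)$). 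The identities $E_i^2=E_i$, $E_iE_j=0$ for $i\ne j$, and $\sum_i E_i = I$ are then the usual orthogonality relations among $p$-th roots of unity, and $\rank E_i = \tr E_i = 1$ follows since the diagonal entries of $\cir(1,\om^i,\ldots,\om^{(p-1)i})$ are all $1$, giving $\tr E_i = \frac{1}{p}\cdot p = 1$; here I would cite Lemma~\ref{trrank} for the rank–trace equality.

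For the second assertion I would connect this idempotent set to the Fourier matrix and then invoke Theorem~\ref{bigdist1}. Concretely, the matrix $F_p$ formed from the first columns of $E_0,E_1,\ldots,E_{p-1}$ (as in the hypothesis of Theorem~\ref{bigdist1}) is, up to the scalar $\tfrac1p$ on each column and a reindexing, exactly the Fourier matrix $F_p$ over $K$: the first column of $E_i$ is $\tfrac1p(1,\om^i,\om^{2i},\ldots,\om^{(p-1)i})\T$, so stacking these as columns gives $\tfrac1p F_p$ (or $\tfrac1p$ times $F_p$ with permuted columns, which does not affect the non-vanishing of submatrix determinants). Then by Theorem~\ref{nicely} the Fourier matrix $F_p$ over $GF(q^{\phi(p)})$ satisfies the Chebotar\"ev condition, i.e.\ every square submatrix has non-zero determinant; scaling columns by a non-zero constant preserves this, so the hypothesis of Theorem~\ref{bigdist1} is met. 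Applying Theorem~\ref{bigdist1} to any $\mathcal{C}_r$ obtained by summing $r$ of the $E_i$ as generator matrix and the remaining $n-r$ as (transposed) check matrix yields an $(n,r,n-r+1)$ mds code, and the cyclicity follows from the fact, recalled just before the theorem and in section~\ref{7}, that sums of circulant idempotents are circulant and the corresponding codes are ideals of $K[C_p]$.

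The main obstacle I anticipate is not conceptual but bookkeeping: making precise the claim that the matrix $F_n$ of ``first columns of the $E_i$'' appearing in Theorem~\ref{bigdist1} really is (a column-scaling of, and possibly column-permutation of) the Fourier matrix $F_p$ to which Theorem~\ref{nicely} applies, so that the Chebotar\"ev property genuinely transfers. One must check that the particular normalisation $\tfrac1p$ and the particular ordering of the idempotents match up — but since scaling any column of a matrix by a non-zero field element multiplies each relevant submatrix determinant by that element (hence preserves non-vanishing), and permuting columns only permutes and possibly sign-changes those determinants, this is harmless. A secondary small point worth stating explicitly is why $\om\in K$ and why $\car K\nmid p$, so that the $E_i$ are genuinely defined over $K$; both are immediate from the hypothesis that $\Phi_p$ is irreducible over $GF(q)=\mathbb{Z}_q$ with $q\ne p$. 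Once these identifications are in place the proof is a two-line appeal to Theorems~\ref{bigdist1} and~\ref{nicely}, exactly paralleling the proof of Theorem~\ref{distancemds2} in the complex case.
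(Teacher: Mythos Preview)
Your proposal is correct and follows essentially the same route as the paper's (very terse) proof: verify that $S$ is a complete orthogonal set of idempotents (the paper simply says ``it is easy to check''), observe that the first rows/columns of the $E_i$ constitute a nonzero multiple of the Fourier matrix $F_p$, and then invoke Theorem~\ref{bigdist1}, with the Chebotar\"ev property supplied by Theorem~\ref{nicely}. Your additional care about the rank computation via trace, the scalar $\tfrac1p$, possible reorderings, and the source of cyclicity fills in precisely the details the paper leaves implicit.
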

\begin{proof} It is easy to check that $S$ is a complete orthogonal
  set of idempotents in $GF(q^{\phi(p)})$. Then the first rows of the
  elements of $S$ constitute (a multiple of) the rows of the Fourier
  matrix $F_p$. The result then follows from Theorem \ref{bigdist1}.
\end{proof}   

\begin{proposition} Suppose $p$ and $q=2p+1$ are primes and that $\om$
  is a primitive $p^{th}$ root of $1$ in $K=GF(q)$. Define (in $K$)
 $E_i=\frac{1}{p}\cir(1,\om^i,\om^{2i}, \ldots,
  \om^{(p-1)i})$ for $i=0,1,\ldots, (p-1)$.  Then $S=\{E_0,E_1, \ldots,
  E_{p-1}\}$ is a complete orthogonal set of idempotents. Further the
  codes produced using any subset of $S$ are
  cyclic mds codes. 
\end{proposition}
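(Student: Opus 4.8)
The plan is to mirror the proof of Theorem~\ref{orthog}, substituting Proposition~\ref{ger} for Theorem~\ref{nicely} in the step where the Chebotar\"ev property of the Fourier matrix is invoked. First I would check the elementary algebraic facts: since $q=2p+1$ is prime, $p \mid q-1$, so $GF(q)=\Z_q$ contains an element $\al$ of order $2p$, and $\om:=\al^2$ is a primitive $p^{th}$ root of $1$ lying in $GF(q)$ (note $p$ is odd here, being a Germain prime other than the trivial case, so $p$ is invertible mod $q$ and $\frac1p$ makes sense). A direct computation with geometric sums then shows $E_i^2=E_i$, $E_iE_j=0$ for $i\neq j$, and $\sum_{i=0}^{p-1}E_i = \cir(1,0,\ldots,0)=I$; each $E_i$ is a nonzero rank-one circulant. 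This is the same routine verification as in Theorem~\ref{orthog}, just over $GF(q)$ rather than $GF(q^{\phi(p)})$, and it can be dispatched with a sentence.

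Next I would identify the matrix $F_p$ attached to this idempotent system. By the construction in Section~\ref{above1}, the matrix built from the first columns of $E_0,E_1,\ldots,E_{p-1}$ is $\frac1p$ times the Fourier matrix $F_p$ over $GF(q)$ formed from powers of $\om=\al^2$ --- precisely the matrix appearing in Proposition~\ref{ger}. Since scaling all columns by the unit $\frac1p$ does not affect whether any square submatrix has nonzero determinant, the Chebotar\"ev property for this $F_p$ is exactly the conclusion of Proposition~\ref{ger}. Thus $F_p$ has the property that every square submatrix has nonzero determinant.

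Finally I would apply Theorem~\ref{bigdist1}: given any subset of $S$, summing its elements gives a circulant generator matrix $G$ of some rank $r$ (the number of idempotents chosen, since each has rank $1$, by Corollary~\ref{trrank1}), and the sum $H$ of the remaining $p-r$ idempotents gives the check matrix $H\T$; since the associated first-columns matrix $F_p$ has the Chebotar\"ev property, Theorem~\ref{bigdist1} yields that the resulting $(p,r)$ code has distance $p-r+1$ and is therefore mds. That these codes are cyclic follows as in Section~\ref{7}: each $E_i$ is circulant, hence so is $G$, and a circulant generator matrix defines a cyclic code (an ideal in $\Z_q C_p$); one may take the first $r$ rows of $G$ as the actual generator matrix by the linear-independence remark carried over from \cite{hur1,hur2}.

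I do not expect a genuine obstacle here: the statement is the ``Germain'' analogue of Theorem~\ref{orthog}, and the only real content --- that the relevant Fourier matrix over $GF(q)$ satisfies the Chebotar\"ev condition --- has already been proved as Proposition~\ref{ger}. The one point requiring a little care is bookkeeping: confirming that the matrix of first columns of the $E_i$ really is (a unit multiple of) the specific $F_p$ of Proposition~\ref{ger}, i.e.\ that the ordering $E_0,\ldots,E_{p-1}$ produces rows indexed by $\om^{ij}$ in the right pattern, so that Theorem~\ref{bigdist1} applies verbatim. Everything else is routine verification, so the proof should read as a short corollary-style argument rather than a fresh development.
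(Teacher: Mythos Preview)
Your proposal is correct and matches the paper's intended approach: the paper gives no explicit proof of this Proposition, treating it as the evident Germain-prime analogue of Theorem~\ref{orthog}, and your write-up faithfully executes that analogy (verify $S$ is a complete orthogonal set of idempotents, identify the first-column matrix with a unit multiple of $F_p$ over $GF(q)$, invoke Proposition~\ref{ger} for the Chebotar\"ev property, then apply Theorem~\ref{bigdist1}). One cosmetic remark: the paper's proof of Theorem~\ref{orthog} speaks of the \emph{first rows} of the $E_i$ rather than first columns, but since the $E_i$ are circulant and $\om^{-1}$ is also a primitive $p^{th}$ root, either choice yields a Fourier matrix and the Chebotar\"ev property is unaffected; also, your parenthetical that ``$p$ is odd'' is unnecessary (and fails for the Germain prime $p=2$, $q=5$), since $p<q$ with $q$ prime already guarantees $p$ is invertible in $GF(q)$.
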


The constructions are fairly general and examples are easy to
construct. 
Similar examples to those of section \ref{Examples} from the point
of view of orthogonal sets of idempotents may be derived. 
 A  small selection of corresponding
examples to those of \ref{Examples} are given below with details
    omitted.
\subsubsection{Examples in finite fields}
\begin{enumerate}
\item $GF(2^2)$: Let $\om $ be a primitive 3rd root of $1$ in
  $GF(2^2)$.  A complete orthogonal set of idempotents is
  $S=\{E_0=\cir(1,1,1), E_1=\cir(1,\om,\om^2),
  E_2=\cir(1,\om^2,\om)\}$.  The first rows of this set gives a
non-zero   multiple of the
  Fourier matrix $F_3$ which  has the Chebotar\"ev
  property. Thus choosing any subset of $S$ as a generator matrix
  determines an mds code and each such codes is cyclic.  This gives
  for example ${3\choose 2}= 3$ cyclic codes of type $(3,2,2)$.
\item $GF(2^4)$. Let $\om$ be a primitive $5^{th}$ root of unity in
  $GF(2^4)$.  Consider the complete orthogonal set of idempotents $S=
  \{ E_0 = \cir(1,1,1,1,1), E_1=\cir(1,\om, \om^2,\om^3,\om^4),
  E_2=\cir(1,\om^2,\om^4,\om,\om^3),
  E_3=\cir(1,\om^3,\om,\om^4,\om^2),
  E_4=\cir(1,\om^4,\om^3,\om^2,\om)$. 

The first rows of $\{E_0,E_1,E_2,E_3,E_4\}$ determine a non-zero 
multiple of the Fourier matrix
$F_5$ over $GF(2^4)$.

So for example choosing the sum of $3$ of the elements of $S$ gives
a $(5,3,3)$ code  
and this
gives ${5\choose 3} = 10$ different cyclic $(5,3,3)$ codes.
\item $GF(2^{10})$. Let $E_i=\cir(1,\om^i,\om^{2i},\ldots, \om^{10i})$
  where $\om$ is a primitive $11^{th}$ root of unity in $GF(2^{10})$
  and $S=\{E_0, E_1, \ldots, E_{10}\}$. Using the first rows of
  $E_0,E_1, \ldots, E_{10}$ constitutes the the Fourier $F_{11}$ over
  $GF(2^{10})$. Now by section \ref{four} this $F_{11}$ has the
  Chebotar\"ev property and hence codes formed
  using sums of elements from $S$ are mds codes which are also
  cyclic.
\item $GF(2^{12})$: Let $E_i=\cir(1,\om^i,\om^{2i},\ldots, \om^{12i})$
  where $\om$ is a primitive $13^{th}$ root of unity in $GF(2^{12})$
  and $S=\{E_0, E_1, \ldots, E_{12}\}$. Using the first rows of
  $E_0,E_1, \ldots, E_{12}$ constitutes a multiple of the Fourier $F_{13}$ over
  $GF(2^{12})$. Now by Section \ref{four} this $F_{13}$ has the
  Chebotar\"ev property and hence codes formed
  using sums of elements from $S$ are mds codes and these  are also
  cyclic. 

\vdots


\item $GF(3^4)$: Construct $E_i=\cir(1, \om^i, \om^{2i}, \ldots,
  \om^{4i})$ where $\om$ is a primitive $5^{th}$ root of unity in
  $GF(3^4)$ and let $S=\{E_0,E_1,E_2,E_3,E_4\}$. Then the first rows
  of these constitute a multiple of the Fourier matrix $F_5$ over
  $GF(3^4)$ which has noted in section \ref{four} has the
  Chebotar\"ev property. Thus codes formed using subsets of 
$S$ are mds cyclic
  codes.
\item $GF(3^6)$: Construct $E_i=\cir(1, \om^i, \om^{2i}, \ldots,
  \om^{6i})$ where $\om$ is a primitive $7^{th}$ root of unity in
  $GF(3^7)$ and let $S=\{E_0,E_1,E_2,E_3,E_4,E_5,E_6\}$. The
  first rows of $\{E_0,E_1, \ldots, E_6\}$ constitute a multiple of 
the Fourier matrix
  $F_7$ over $GF(3^6)$ which has noted in section \ref{four} has the
  Chebotar\"ev property. Thus codes formed from subsets of $S$ are mds codes.
\item $GF(3^{16})$: Mds cyclic codes may be obtained from $\{E_0,
  E_1,\ldots, \ldots, E_{16}\}$; details are omitted. For example we
  may obtain ${17 \choose 9} = 24310 $ mds cyclic $(17,9,9)$ codes.


\end{enumerate}

Further (cyclic) examples may be
  obtained similar to those in section \ref{Examples} using $GF(5^r)$, 
  $GF(7^r)$, $GF(11^r)$, and so on.

\subsection{Equality} 
The question arises in this case as to whether or not the codes produced
from idempotents in the group ring of the cyclic group are the same as
the (corresponding) unit-derived ones in section \ref{units} using
rows of the Fourier matrix. 
It may be shown that they have the same check matrix (the details are
omotted) and so they are
equal but this is not obvious from the way they are constructed and
going from one generator matrix to another is not easy. 
Each presentation has its own advantages. 

\section{Decoding}\label{decode}
 A minor variation of the Peterson-Gorenstein-Zierler algorithm, see
\cite{blahut} Chapter 6 for details, may be used for codes 
 where the 
chosen rows of the Fourier matrix as in section \ref{units} are
consecutive. The details are not included as  
better and more efficient decoding algorithms exist as shown below.
Cyclic codes may be decoded by any general technique for decoding
cyclic codes. 

Error-locating pairs and error-correcting pairs were introduced in 
\cite{koetter} and \cite{pell} and this is the approach taken here.

For vectors ${\bf{u}}=(u_0,u_1,\ldots,
u_{n-1})$ and ${\bf{v}}=(v_0,v_1,\ldots, v_{n-1}) $ define ${\bf{u}}*{\bf{v}}=
(u_0v_0,u_1v_1,\ldots,u_{n-1}v_{n-1}) $. For subspaces $U,V$ define
$U*V = \{{\bf{u}}*{\bf{v}} | {\bf{u}}\in U, {\bf{v}}\in V\}$.

Let $C^\perp$ denote the
orthogonal complement of $C$, $k(C)$ the dimension of $C$ and
$d(C)$ denote the (minimum) distance of $C$.
Let $U,V,C$ be linear codes over a field $K$. Say $(U,V)$ is a
{\em $t$-error locating pair} for $C$ if (i) $U*V \subseteq C^\perp$, (ii) $k(U)
> t$, (iii) $d(V^\perp) > t$.
If further (iv) $d(C)+d(U) >n$, where $n$ denotes the code length of
$C$, then say $(U,V)$ is a {\em $t$-error correcting pair} for $C$. 

We now show how $t$-error correcting pairs may  be
constructed for many of the $(n,r,n-r+1)$ codes as described by
the unit-derived method of section \ref{units} with $2t=n-r$. 

Suppose $K$ is a field which contains a primitive $n^{th}$ root of
unity $\om$ such that the inverse of $n$ exists in $K$.
Define $e_0= (1,1,\ldots, 1), e_1=(1,\om, \om^2, \ldots, \om^{n-1}),
\ldots, e_{n-1}=(1,\om^{n-1},\om^{2(n-1)}, \ldots,
\om^{(n-1)(n-1)})$. 

The set $S=\{e_0,e_1,\ldots,e_{n-1}\}$ is a basis for $K^n$ as
it consists of the rows of the Fourier matrix and so $S$ is a set of
$n$ linearly independent vectors in $K^n$.

The dot/scalar product of $u,v$
for vectors $u,v$ in $K^n$ is denoted by $u\cdot v$.

\begin{lemma}\label{jam} $e_i*e_j = e_{i+j}$ where $i+j$ is interpreted $\mod n$.
\end{lemma}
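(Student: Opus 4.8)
The plan is to compute the $k$-th coordinate of $e_i * e_j$ directly from the definitions and recognise it as the $k$-th coordinate of $e_{i+j}$. Recall that $e_i = (1,\om^i,\om^{2i},\ldots,\om^{(n-1)i})$, so its $k$-th coordinate (indexing from $k=0$) is $\om^{ki}$. Hence the $k$-th coordinate of $e_i * e_j$ is $\om^{ki}\cdot\om^{kj} = \om^{k(i+j)}$, which is exactly the $k$-th coordinate of $e_{i+j}$. This holds for every $k \in \{0,1,\ldots,n-1\}$, so $e_i * e_j = e_{i+j}$.

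The only subtlety is the phrase ``where $i+j$ is interpreted $\bmod\ n$'': I would observe that $\om$ is a primitive $n$-th root of unity, so $\om^n = 1$ and therefore $\om^{k(i+j)}$ depends only on the residue of $i+j$ modulo $n$. Concretely, if $i+j \equiv m \pmod n$ with $0 \le m \le n-1$, then $\om^{k(i+j)} = \om^{km}$ for all $k$, so $e_i * e_j = e_m$ with the subscript taken in the range $\{0,\ldots,n-1\}$. This makes the statement well-defined as an identity among the vectors $e_0,\ldots,e_{n-1}$.

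I do not anticipate any real obstacle here; the lemma is essentially a restatement of the fact that the map $i \mapsto e_i$ converts addition of exponents modulo $n$ into the Hadamard (coordinatewise) product, which is immediate from the formula for the coordinates. The one thing to be careful about is bookkeeping of the index convention (rows and columns indexed from $0$ to $n-1$, matching the Fourier matrix convention already fixed in the paper), but that is routine. So the proof is just the one-line coordinate computation together with the remark about reducing the exponent modulo $n$.
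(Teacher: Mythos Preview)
Your proof is correct and follows essentially the same approach as the paper: a direct coordinatewise computation showing that the $k$-th entry of $e_i*e_j$ is $\om^{ki}\om^{kj}=\om^{k(i+j)}$, which is the $k$-th entry of $e_{i+j}$. Your added remark that $\om^n=1$ makes the subscript well-defined modulo $n$ is a helpful clarification that the paper leaves implicit.
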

\begin{proof} $e_i*e_j = (1,\om^i,\om^{2i},\ldots,
  \om^{(n-1)i})*(1,\om^{j},\om^{2j}, \ldots, \om^{(n-1)j} =
  (1,\om^{i+j},\om^{2(i+j)}, \ldots, \om^{(n-1)(i+j)}) = e_{i+j}$.
\end{proof}

\begin{lemma}\label{jam1} Suppose $U=\langle {\bf {u_1,u_2,\ldots,u_{k}}}\rangle,
  V=\langle {\bf v_1,v_2,\ldots,v_s}\rangle$ for vectors ${\bf u_i,v_j}$.
Then $U*V \subseteq \langle {\bf u_i}*{\bf v_j} \, |\, 1\leq i\leq k, 1\leq j \leq
s \rangle$.
\end{lemma}

\begin{lemma}\label{jam2} Let $I=\{0,1,2,\ldots, n-1\}$ and $J\subseteq
    I$. Consider 
    $C =\langle e_j \, | \, j\in J\rangle$. Define $\hat{J}= \{ n-j
    \mod n \,
    | \, j\in J\}$ and  $K= (I - \hat{J})$. Then $C^\perp = \langle e_k \, | \,
    k\in K \rangle$. 
\end{lemma}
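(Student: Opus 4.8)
\textbf{Proof proposal for Lemma \ref{jam2}.}

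The plan is to compute the orthogonal complement directly by pairing the basis vectors $e_i$ against each other under the dot product and showing the pairing is "diagonal" in a suitable index sense. First I would establish the key orthogonality relation: for $0 \le a, b \le n-1$,
\[
e_a \cdot e_b = \sum_{t=0}^{n-1} \om^{at}\om^{bt} = \sum_{t=0}^{n-1}(\om^{a+b})^t,
\]
which equals $n$ when $a+b \equiv 0 \pmod n$ and equals $0$ otherwise, using the standard geometric-sum evaluation (valid since $\om$ is a primitive $n$th root of unity and $n$ is invertible in $K$). Thus $e_a \cdot e_b \neq 0$ precisely when $b \equiv n-a \pmod n$. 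Note this is exactly the statement that $e_a$ and $e_b$ are "dual" indices.

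Next I would use this to identify $C^\perp$. Since $S=\{e_0,\dots,e_{n-1}\}$ is a basis of $K^n$ (it consists of the rows of the Fourier matrix, which is a unit), every vector of $K^n$ is uniquely $\sum_i c_i e_i$, and $w = \sum_{i} c_i e_i$ lies in $C^\perp = \langle e_j \mid j \in J\rangle^\perp$ iff $w \cdot e_j = 0$ for all $j \in J$. By the orthogonality relation, $w \cdot e_j = c_{n-j \bmod n}\, n$, so $w \in C^\perp$ iff $c_{n-j \bmod n} = 0$ for every $j \in J$, i.e. iff $c_m = 0$ for every $m \in \hat J = \{n-j \bmod n \mid j \in J\}$. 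Hence $C^\perp$ is spanned by those $e_m$ with $m \notin \hat J$, that is $C^\perp = \langle e_k \mid k \in K\rangle$ with $K = I - \hat J$. Counting dimensions as a sanity check: $|K| = n - |\hat J| = n - |J|$, matching $\dim C^\perp = n - \dim C$.

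I do not anticipate a serious obstacle here; the only point needing a little care is making sure the geometric-sum evaluation is applied correctly (the exponent $a+b$ must be reduced mod $n$ before deciding whether the common ratio $\om^{a+b}$ is $1$), and confirming that $j \mapsto n-j \bmod n$ is a bijection of $I$ so that $\hat J$ has the same size as $J$ and $K$ is well-defined. Both are immediate. Everything else is bookkeeping with the unique expansion in the basis $S$.
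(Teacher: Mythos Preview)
Your proof is correct and follows the same approach as the paper: both rest on the orthogonality relation $e_a\cdot e_b \neq 0$ iff $b\equiv n-a \pmod n$, which the paper simply asserts (in fact with the ``if and only if'' stated in the wrong direction, evidently a typo) and you derive explicitly via the geometric sum. Your version is more detailed---expanding an arbitrary vector in the basis $S$ and reading off which coefficients must vanish---but the underlying idea is identical.
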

\begin{proof} This follows since $e_i\cdot e_j = 0$ if and only if
  $j=n-i \mod n$.
\end{proof} 

Suppose now the Fourier matrix with rows $e_i, 0\leq i \leq (n-1)$ 
has the Chebot\"arev
property that the determinant of any submatrix is non-zero. Then as
pointed out the code generated by any $r$ of the vectors $S$ is an
$(n,r,n-r+1)$ code. 

We now construct $t$-error correcting pairs for many of these
codes with maximum $t$. 

Suppose the $r$ vectors of $S$ are chosen
consecutively as $\{e_i,e_{i+1}, \ldots, e_{i+r-1}\}$ to form a code 
where suffices are
interpreted $\mod n$.
We shall show that in this case how to construct a (nice) $t$-error
correcting pair,  $2t=n-r$. 
We do this in the case of the code ${C}$ 
generated by $\{e_0,e_1,\ldots, e_{r-1}\}$; the other cases are
similar.  
From Lemma \ref{jam2} it is seen 
 that $\langle e_1,e_{2}, \ldots, e_{n-r-1}\rangle
\subseteq C^\perp$. 
Set $U=\langle e_0,\ldots, e_{t}\rangle$. The dimension of $U$ is
$k(U)= t+1> t$ (as $\{e_0,e_1,\ldots, e_t\}$ is linearly independent). 
Set $V=\langle
e_1,\ldots, e_{t-1}\rangle$. Then $V^\perp = \langle e_0, e_1, \ldots,
e_{n-t-1}\rangle$. Now $V^\perp$ is a $(n,n-t, t+1)$ code and so
$d(V^\perp)>t$. Now by Lemma \ref{jam} and Lemma 
\ref{jam1}, $U*V\subseteq \langle
e_1,e_{2} ,\ldots, e_{2t-1}\rangle = \langle e_1,e_{r+1} \ldots,
e_{n-r-1} \rangle\subseteq C^\perp$. 
Thus conditions (i),(ii),(iii) are satisfied for the pair $(U,V)$. Now
$U$ is a $(n,t+1,n-t)$ code and so $d(U)=n-t$. 
Hence $d(C)+d(U) = (n-r+1)+(n-t)= 2n-r-t+1= n+(n-r)-t+1 = n+2t-t+1 =
n+t+1 > n$. Thus condition (iv) is satisfied for the pair $(U,V)$ and
so $(U,V)$ is a $t$-error correcting pair.

Similarly it is also possible to construct (nice) $t$-error correcting pairs when the
$\{e_{i_1}, e_{i_2}, \ldots, e_{i_r}\}$ has other structures such as
when  
the consecutive differences $i_{j+1}-i_{j}$ are constant. The problem of getting
$t$-error correcting pairs  for a more general $\{e_{i_1}, e_{i_2}, \ldots,
e_{i_r}\}$ is left open. 

As an example consider the Fourier matrix $F_{11}$ over $K=GF(23)$
constructed in section \ref{Examples}. Use $\om=5^2=2$ which is a
primitive ${11}^{th}$ root of $1$ in $K$. Let the rows of $F_{11}$ be denoted
by $\{e_0,e_1,\ldots, e_{10}\}$ and  let $\mathcal{C}_7$ be the $(11,7,5)$
code generated by the first $7$ rows of $F_{11}$. 
We now define a $2$-error correcting pair $(U,V)$ as follows.
 
Define $U=\langle
e_0,e_1,e_2\rangle $ and $V=\langle{e_1,e_2}\rangle$. Then
(i) $U*V \subseteq \langle e_1,e_2,e_3,e_4\rangle \subseteq \mathcal{C}_7^\perp$,
(ii) $U$ has dimension $3$,
(iii) $V^\perp$ has distance $3$, (iv) $d(\mathcal{C}_7)+ d(U) = 5 + 9
> 11$. Thus $(U,V)$ is a $2$-error correcting  pair. The matrices $M(U),M(V)$ of
$U,V$ respectively are as follows:

$M(U)= \begin{pmatrix} 1& 1&1& \ldots  & 1 \\ 1 & \om & \om^2 & \ldots
& \om^{10} \\ 1& \om^2&\om^4 & \ldots & \om^{20} \end{pmatrix} =
\begin{pmatrix} 1& 1& 1& 1&1&1&1&1&1&1 & 1 \\ 1&2 &4 &
8&16&9&18&13&3&6& 12 \\ 
1 & 4 &
16 & 18&3&12&2&8&9&13 & 6 \end{pmatrix}$.

$M(V) = \begin{pmatrix}  1 & \om & \om^2 & \ldots
& \om^{10} \\ 1& \om^2&\om^4 & \ldots & \om^{20} \end{pmatrix} =
\begin{pmatrix} 1&2 &4 &
8&16&9&18&13&3&6& 12 \\ 
1 & 4 &
16 & 18&3&12&2&8&9&13 & 6 \end{pmatrix}$.

Similarly $2$-error correcting pairs may be obtained for any code
generated by \\ $\{e_i,
e_{i+1},e_{i+2},e_{i+3},e_{i+4},e_{i+5},e_{i+6}\}$. (The suffices should
be taken $\mod 11$.) For the code generated by
$\{e_0,e_2,e_4,e_6,e_{10},e_1\}$ (which have difference of $2$ in the
consecutive suffices) the pair $(U,V)$ with $U=\langle e_0, e_2,e_4 \rangle,
  V=\langle e_2,e_4 \rangle$ is a $2$-error correcting pair. 

In a similar manner $3$-error correcting pairs may be obtained for the
$(11,5,7)$ code generated by any $\{e_i,e_{i+1}, e_{i+2},
e_{i+3},e_{i+4}\}$ or more generally for any $\{e_i,e_{i+j}, e_{i+2j},
e_{i+3j},e_{i+4j}\}$ with $1\leq j \leq 10$.
 For example if $C = \langle e_0,e_2,e_4,e_6,e_8\rangle$ then
    $U=\langle e_0,e_2,e_4,e_6 \rangle, V= \langle e_2,e_4,e_6\rangle$
   constitute a $3$-error correcting pair $(U,V)$.



\begin{thebibliography}{99}

\bibitem{idemrank} Oskar M. Baksalary, Dennis S. Bernstein, G\"otz
  Trenkler, ``On the equality between rank and trace of an idempotent
  matrix'', Applied Mathematics and Computation, 217, 4076-4080, 2010.

\bibitem{blahut} Richard E.\ Blahut, {\em Algebraic Codes for data
  transmission}, Cambridge University Press, 2003.
\bibitem{koetter} I. Duursma, R. K\"otter, ``Error-locating pairs for
  Cyclic Codes'', IEEE Transactions in
  Information Theory, 40, 1108-1121, 1994. 
\bibitem{isaacs} R. J. Evans and I. M. Isaacs, ``Generalized
  Vandermonde determinants and roots of unity of prime order'',
  Proc.\ of Amer.\ Math.\ Soc.\, 58, 51-54, 1977.
\bibitem{simple} P. E. Frenkel, `` Simple proof of Chebotar\"ev's
 theorem on roots of unity'', arXiv:math/0312398.
\bibitem{gap} `GAP -- Groups, Algorithms and Programming',
  www.gap-system.org
\bibitem{isaacs1} Daniel Goldstein, Robert M. Guralnick, and
  I. M. Isaacs, ``Inequalities for finite group permutation modules'',
  Trans.\ Amer.\ Math.\ Soc.\ , 10, 4017-4042, 2005.

\bibitem{hur1} Paul Hurley and Ted Hurley, ``Codes from zero-divisors
  and units in group rings'', Int. J. Inform. and Coding Theory, 1
  (2009), 57-87.
\bibitem{hur2} Paul Hurley and Ted Hurley, ``Block codes from matrix
  and group rings'', Chapter 5, 159-194, in {\em Selected Topics in
    Information and Coding Theory} eds. I. Woungang, S. Misra,
  S.C. Misma, World Scientific 2010.
\bibitem{hur3} Ted Hurley, ``Group rings and rings of matrices'',
  Inter. J. Pure \& Appl. Math., 31, no.3, 2006, 319-335.

\bibitem{hur5} Barry Hurley and Ted Hurley, ``Paraunitary matrices'',
  arXiv:1205.0703.
 
\bibitem{seh} C\'esar Milies \& Sudarshan Sehgal, {\em An introduction
  to Group Rings}, Klumer, 2002.
\bibitem{pak} F.\ Pakovich, ``A remark on the Chebotarev Theorem about
  roots of unity'', Integers: Elec.\ J.\ of Combinatorial Number
  Theory, 7, \# A18, (2 pages), 2007.
\bibitem{pell} R. Pellikaan, ``On decoding by error location and
  dependent sets of error positions'', Discrete Math., Vol.\ 106/107,
  369-381, 1992. 

\bibitem{steven} P. Stevenhagen and H. W. Lenstra, Jr., ``Chebotar\"ev
  and his density theorem'', Math.\ Intell.\, 18, 26-37, 1996.

\bibitem{tao} Terence Tao, `` An uncertainty principle for groups of
  prime power order'', Math.\ Res.\ Lett., 12, no.\ 1, 121-127, 2005.
\end{thebibliography}
\end{document}